%
\documentclass{llncs}
\usepackage{amsmath,amssymb,latexsym}
\usepackage{code}
\usepackage{mathpartir}
\usepackage{upgreek}
\usepackage[T1]{fontenc}
\usepackage{alltt}
\usepackage{tikz}

\usepackage{float}
\floatstyle{boxed} 
\restylefloat{figure}

\let\inst=\undefined



\newcommand{\ie}{\emph{i.e.}}
\newcommand{\eg}{\emph{e.g.}}

\newcommand{\etal}{\emph{et al.}}


\newcommand{\var}[1]{\mathit{#1}}
\newcommand{\s}[1]{\mathit{#1}}



\newcommand{\dom}{\var{dom}}





\newcommand{\wt}{\sqsubseteq}






\newenvironment{grammar}{\begin{array}{r@{\;}c@{\;}l@{\;}c@{\;}l@{\;\;\;\;}l}}{\end{array}}

\newcommand{\opor}{\mathrel{|}}

\newcommand{\produces}{\mathrel{::=}}













\newcommand{\inst}{\iota}

\newcommand{\expr}{e}




\newcommand{\state}{\varsigma}




\newcommand{\store}{\sigma}

\newcommand{\addr}{a}




\newcommand{\sa}[1]{\widehat{\mathit{#1}}}

\newcommand{\astate}{{\hat{\varsigma}}}


\newcommand{\astore}{{\hat{\sigma}}}

\newcommand{\aaddr}{{\hat{\addr}}}
\newcommand{\aalloc}{{\widehat{alloc}}}

\newcommand{\absmap}{\alpha}


\input{local-macros}

\newcommand{\mylongtitle}{Pushdown Abstractions of JavaScript}

\newcommand{\myshorttitle}{\mylongtitle}


\begin{document}
\def\inst#1{\unskip$^{#1}$}
\title{\mylongtitle}
\titlerunning{\myshorttitle}
\author{David Van Horn\inst{1} \and Matthew Might\inst{2}}
%
%
\institute{Northeastern University, Boston, Massachusetts, USA
\and University of Utah, Salt Lake City, Utah, USA}

\maketitle              

\begin{abstract}
  We design a family of program analyses for JavaScript that
make \emph{no approximation} in matching calls with returns,
exceptions with handlers, and breaks with labels.  We do so by
starting from an established reduction semantics for JavaScript and
systematically deriving its intensional abstract interpretation.  Our
first step is to transform the semantics into an equivalent low-level
abstract machine: the JavaScript Abstract Machine (JAM).
We then give an infinite-state yet decidable pushdown machine whose
stack precisely models the structure of the concrete program stack.
The precise model of stack structure in turn confers \emph{precise}
control-flow analysis even in the presence of control effects, such as
exceptions and finally blocks.
We give pushdown generalizations of traditional forms of analysis such
as $k$-CFA, and prove the pushdown framework for abstract
interpretation is sound and computable.

\end{abstract}

\newcommand\evf{{\mathit{eval}}}
\newcommand\JS{{\mathit{JS}}}
\newcommand\AJS{{\widehat{\mathit{JS}}}}
\newcommand\betavalue{{\mathbf{v}}}




\section{Introduction}

JavaScript is the dominant language of the web, making it the most ubiquitous
programming language in use today.  Beyond the browser, it is increasingly
important as a general-purpose language, as a server-side scripting language,
and as an embedded scripting language---notably, Java 6 includes
support for scripting applications via the {\tt javax.script} package, and the
JDK ships with the Mozilla Rhino JavaScript engine.  Due to its ubiquity,
JavaScript has become the target language for an array of compilers for
languages such as C\#, Java, Ruby, and others, making JavaScript a widely used
``assembly language.''
As JavaScript cements its foundational role, the importance of robust static
reasoning tools for that foundation grows.

Motivated by the desire to handle non-local control effects such as
exceptions and {\tt finally} precisely, we will depart from standard
practice in higher-order program analysis to derive an
\emph{infinite}-state yet decidable pushdown abstraction from our
original abstract machine.
The stack of the pushdown abstract interpreter \emph{exactly} models the
stack of the original abstract machine with no loss of
structure---approximation is inflicted on only the control states.
This pushdown framework offers a degree of precision in reasoning about control
inaccessible to previous analyzers.

Pushdown analysis is an alternative paradigm for the analysis of
higher-order programs in which the run-time program stack is precisely
modeled with the stack of a pushdown
system~\cite{dvanhorn:Vardoulakis2011CFA2,dvanhorn:Earl2010Pushdown}.
Consequently, a pushdown analysis can exactly match control flow
transfers from calls to returns, from throws to handlers, and from
breaks to labels.  This in contrast with the traditional approaches of
finite-state abstractions which necessarily model the control stack
with finite bounds.

As an example demonstrating the basic difference between traditional
approaches, such as 0CFA, and our pushdown approach, consider the
following JavaScript program:
\begin{alltt}
   // \((\mathbb{R}\rightarrow\mathbb{R}) \rightarrow (\mathbb{R}\rightarrow\mathbb{R})\)
   // Compute an approximate derivative of f.
   function deriv(f) \{
     var \(\epsilon\) = 0.0001;
     return function (x) \{
       return (f(x+\(\epsilon\)) - f(x-\(\epsilon\))) / (2*\(\epsilon\));
     \};
   \};
   deriv(function (y) \{ return y*y; \});
\end{alltt}
The {\tt deriv} program computes an approximation to the derivative of
its argument.  In this example, it is being applied the square
function, so it returns an approximation to the double function.

It is important to take note of the two distinct calls to {\tt f}.
Basic program analyses, such as 0CFA, will determine that the square
function is the target of the call at {\tt f(x+\(\epsilon\))}.
\emph{However, they cannot determine whether the call to {\tt
    f(x+\(\epsilon\))} should return to {\tt f(x+\(\epsilon\))} or to
  {\tt f(x-\(\epsilon\))}.}  Context-sensitive analysis, such as 1CFA,
can reason more precisely by distinguishing the analysis of each call
to {\tt f}, however such methods come with a prohibitive computational
cost~\cite{dvanhorn:VanHorn-Mairson:ICFP08} and, more fundamentally,
$k$-CFA will only suffice for precisely reasoning about the control
stack up to a fixed calling context depth.

This is the fundamental shortcoming of traditional approaches to
higher-order program analysis, both in functional and object-oriented
languages.  This is an unfortunate situation, since the dominant
control mechanism is calls and returns.  To make matters worse, in
addition to higher-order functions, JavaScript includes sophisticated
control mechanisms further complicating and confounding analytic
approaches.

To overcome this shortcoming we use a \emph{pushdown} approach to
abstraction that exactly captures the behavior of the control stack.
We derive the pushdown analysis as an abstract interpretation of an
abstract machine for JavaScript.  The crucial difference between our
approach and previous approaches is that we will leave the stack
unabstracted. As this abstract interpretation ranges over an infinite
state-space, the main technical difficulty will be recovering
decidability of reachable states.

\subsection*{Challenges from JavaScript}
 JavaScript is an expressive, aggressively dynamic, high-level
programming language.  
It is a higher-order, imperative, untyped language that is both functional and
object-oriented, with prototype-based inheritance, constructors, non-local
control, and a number of semantic quirks.
Most quirks simply demand attention to detail, e.g.:
\begin{code} 
  if (false) \{ var x ; \} 
  ... x ... // x is defined \end{code}
Other quirks, such as the much-maligned {\tt with}
construct end up succumbing to an unremarkable desugaring.
Yet other features, like non-local control effects and prototypical inheritance,
require attention in the mechanics of the analysis itself; 
for a hint of what is possible, consider:
\begin{code}
 out: while (true) 
  try \{
    break out ;
  \} finally \{
    try \{
      return 10 ;
    \} finally \{ 
      console.log("this runs; 10 returns") ;
    \}
  \}\end{code}

It has become customary when reasoning about JavaScript to assume
well-behavedness---that some subset of its features are never
(or rarely) used for many programs.
Richards, Lebresne, Burg and Vitek's
thorough study~\cite{dvanhorn:Richards2010Analysis} has cast empirical doubt on these well-behavedness assumptions,
finding almost every language feature used in almost every program in a large
corpus of widely deployed JavaScript code.

Our goal is a principled approach for reasoning about \emph{all} of
JavaScript, including its unusual semantic peculiarities and its complex control
mechanisms.
To make this possible, the first step is the calculation of an \emph{abstractable}
abstract machine from an established semantics for JavaScript.
From there, a pushdown abstract interpretation of that machine yields
a sound, robust framework for static analysis of JavaScript with
precise reasoning about the control stack.

\subsection*{Contributions}

The primary contribution of this work is a provably sound and
computable framework for infinite-state pushdown abstract
interpretations of all of JavaScript, sans {\tt eval}, that
\emph{exactly} models the program stack including complex local and
non-local control-flow relationships, such as proper matching between
calls and returns, throws and handlers, and breaks and
labels.\footnote{ One might wonder why {\tt break} to a label requires
  non-local reasoning.
  In fact, it should not require it, but Guha~\etal's desugaring into $\lambda_{JS}$,
  handles {\tt break} using powerful escape continuations.
  Constraints in the desugaring process prevent these escape continuations from crossing interprocedural 
  boundaries, but unrestricted---or optimized $\lambda_{JS}$---may violate these constraints.
  For completeness, we handle full, unrestricted $\lambda_{JS}$, which means we must model
  these general escape continuations.
}

In support of our primary contribution, our secondary contributions
include the development of a variant of a known formal model for
JavaScript as a calculus of explicit substitutions; a correct abstract
machine for this model obtained via a detailed derivation, carried out
in SML, going from the calculus to the machine via small,
meaning-preserving program transformations; and executable semantic
models for the reduction semantics, the abstract machine and its
pushdown-abstractions, written in PLT
Redex~\cite{dvanhorn:Felleisen2009Semantics}.



\subsection*{Outline}

Section~\ref{sec:rho} gives the syntax and semantics of a core
calculus of explicit substitutions based on the
$\lambda_\JS$-calculus.  This new calculus, $\lambda\rho_\JS$, is
shown to correspond with $\lambda_\JS$.  Section~\ref{sec:jam} derives
an abstract machine, the JavaScript Abstract Machine (JAM), from the
calculus of explicit substitutions, which is a correct machine for
evaluating $\lambda_\JS$ programs.  The machine has been crafted in
such a way that it is suitable for approximation by a pushdown
automaton.  Section~\ref{sec:pda} yields a family of pushdown abstract
interpreters by a simple store-abstraction of the JAM.  The framework
is proved to be sound and computable.  Specific program analyses are
obtained by instantiating the allocation strategy for the machine and
examples of strategies corresponding to pushdown generalization of
known analyses are given.  Section~\ref{sec:related} relates this work
to the research literature and section~\ref{sec:conclusion} concludes.

\paragraph{Background and notation:} We assume a basic familiarity with
reduction semantics and abstract machines.  For background on
concepts, terminology, and notation employed in this paper, we refer
the reader to \emph{Semantics Engineering with PLT
  Redex}~\cite{dvanhorn:Felleisen2009Semantics}.  Our construction of
machines from reduction semantics follows Danvy, \etal's
refocusing-based
approach~\cite{dvanhorn:Danvy-Nielsen:RS-04-26,dvanhorn:Biernacka2007Concrete,dvanhorn:Danvy:AFP08}.
Finally, for background on systematic abstract interpretation of 
abstract machines, see our recent work on the approach~\cite{dvanhorn:VanHorn2010Abstracting}.

\section{A calculus of explicit substitutions: $\boldsymbol{\lambda\rho_\JS}$}
\label{sec:rho}

Our semantics-based approach to analysis is founded on abstract
machines, which give an idealized characterization of a low-level
language implementation.
As such, we need a correct abstract machine for JavaScript.
Rather than design one from scratch and manually verify its correctness after
the fact, we rely on the syntactic correspondence between calculi and
machines and adopt the refocusing-based approach of Danvy,~\etal, to
construct an abstract machine systematically from an established semantics for
JavaScript.

Guha, Saftoiu, and Krishnamurthi~\cite{dvanhorn:Guha2010Javascript} give a 
small core calculus, $\lambda_\JS$, with a small-step reduction semantics
using evaluation contexts and demonstrate that full JavaScript can be desugared
into $\lambda_\JS$.  The semantics accounts for all of JavaScript's features
with the exception of {\tt eval}.
Only some of JavaScript quirks are modeled directly, while other aspects are
treated traditionally.  For example, lexical scope is modeled with
substitution.
The desugarer is modeled formally and also available as a standalone Haskell program.

We choose to adopt the $\lambda_\JS$ model since its small size
results in a tractably sized abstract machine.

The remainder of this paper focuses on machines and abstract
interpretation for $\lambda_\JS$.  We refer the reader to Guha,~\etal,
for details on desugaring JavaScript to $\lambda_\JS$ and rational
for the design decisions made.

\subsection{Syntax}

The syntax of $\lambda\rho_\JS$ is given in figure~\ref{fig:syntax-rho}.
Syntactic constants include strings, numbers, addresses, booleans, the
undefined value, and the null value.  Addresses are first-class values
used to model mutable references.  Heap allocation and dereference is
made explicit through desugaring to $\lambda_\JS$.  Syntactic values
include constants, function terms, and records.  Records are keyed by
strings and operations on records are modeled by functional update,
extension, and deletion.  Expressions include variables, syntactic
values, and syntax for let binding, function application, record
dereference, record update, record deletion, assignment, allocation,
dereference, conditionals, sequencing, while loops, labels, breaks,
exception handlers, finalizers, exception raising, and application of
primitive operations.
A program is a closed expression.

\begin{figure}
\[
\begin{grammar}
\mstr &\in & \s{String}\\
\mnum & \in & \s{Number}\\
\maddr & \in & \s{Address}\\
\mvar & \in & \s{Variable}
\\[1mm]
\mexp,\mexpo,\mexpoo &\produces&
\mvar 
\opor \mstr 
\opor \mnum 
\opor \maddr
\opor \strue 
\opor \sfalse
\opor \sundefn
\opor \snull
\\
&\opor&
\sfunc\mvars\mexp 
\opor \srec{\vec{\scol\mstr\mexp}}  
\opor \slet\mvar\mexp\mexp 
\opor \sapp\mexp\mexps 
\opor \saref\mexp\mexp
\\
&\opor& \saset\mexp\mexp\mexp
\opor \sdel\mexp\mexp
\opor \sset\mexp\mexp 
\opor \sref\mexp 
\opor \sderef\mexp
\\
&\opor & \sif\mexp\mexp\mexp \opor \sseq\mexp\mexp \opor \swhile\mexp\mexp 
\\
&\opor & \slab\mlab\mexp \opor \sbreak\mlab\mexp
\opor \stryc\mexp\mvar\mexp
\\
&\opor & \stryf\mexp\mexp 
\opor \sthrow\mexp 
\opor \sop{\vec\mexp}
\\[1mm]
\mvaloo,\mvalo,\mval &\produces & \mstr 
\opor \mnum 
\opor \maddr
\opor \strue 
\opor \sfalse
\opor \sundefn \opor \snull
\opor \scls{\sfunc\mvars\mexp}\menv 
\opor \srec{\vec{\scol\mstr\mval}}
\\[1mm]
\mcls,\mclso &\produces & \scls\mexp\menv
\opor \srec{\vec{\scol\mstr\mcls}}
\opor \slet\mvar\mcls\mcls
\opor \sapp\mcls\mclss
\opor \saref\mcls\mcls
\\
&\opor& \saset\mcls\mcls\mcls
\opor \sdel\mcls\mcls
\opor \sset\mcls\mcls 
\opor \sref\mcls 
\opor \sderef\mcls
\\
&\opor & \sif\mcls\mcls\mcls 
\opor \sseq\mcls\mcls 
\opor \swhile\mcls\mcls
\\
&\opor & \slab\mlab\mcls 
\opor \sbreak\mlab\mcls
\opor \stryc\mcls\mvar\mcls
\\
&\opor & \stryf\mcls\mcls
\opor \sthrow\mcls 
\opor \sop{\mclss}
\end{grammar}
\]
\caption{Syntax of $\lambda_\JS$}
\label{fig:syntax-rho}
\end{figure}

\subsection{Semantics}

Guha,~\etal, give a substitution-based reduction semantics formulated
in terms of Felleisen-Hieb-style evaluation
contexts~\cite{dvanhorn:Felleisen2009Semantics}.
The use of substitution in $\lambda_\JS$ is traditional from a
theoretical point of view, and is motivated in part by want of
conventional reasoning techniques such as subject reduction. 
On the other hand, environments are traditional from an implementation
point of view.
To mediate the gap, we first develop a variant of $\lambda_\JS$ that
models the meta-theoretic notion of substitution with explicit
substitutions.

Substitutions are represented explicitly with \emph{environments},
which are finite maps from variables to values.

Substitution $[\mval/\mvar]\mexp$, which is a meta-theoretic notation
denoting $\mexp$ with all free-occurrences of $\mvar$ replaced by
$\mval$, is represented at the syntactic level in $\lambda\rho_\JS$, a
calculus of explicit substitutions, as a pair consisting of $\mexp$
and an environment representing the substitution:
$\scls\mexp{\{(\mvar,\mval)\}}$.  Such a pair is known as a \emph{closure}.
 
The heap is modeled as a top-level \emph{store}, a finite map from addresses
to values.

\begin{figure*}
\[
\begin{array}{rcl}
\multicolumn{3}{l}{\mbox{Context-insensitive, store-insensitive rules:}}
\\[1mm]
\scls\mvar\menv &\stepsto& \menv(\mvar)
\\
\slet\mvar\mval{\scls\mexp\menv} &\stepsto& \scls\mexp{\menv[\mvar\mapsto\mval]}
\\
\sapp{\scls{\sfunc\mvars\mexp}\menv}\mvals &\stepsto& \scls\mexp{\menv[\mvars\mapsto\mvals]}
\mbox{, if }|\mvars| = |\mvals|
\\
\saref{\srec{\vec{\scol\mstr\mval},\scol{\mstr_i}\mval,\vec{\scol\mstr\mval}'}}{\mstr_i} &\stepsto& \mval
\\
\saref{\srec{\vec{\scol\mstr\mval}}}{\mstr_x} &\stepsto& \sundefn
\mbox{, if }\mstr_x\not\in\vec\mstr
\\
\saset{\srec{\vec{\scol\mstr\mval},\scol{\mstr_i}{\mval_i},\vec{\scol\mstr\mval}'}}{\mstr_i}\mval &\stepsto &
\srec{\vec{\scol\mstr\mval},\scol{\mstr_i}\mval,\vec{\scol\mstr\mval}'}
\\
\saset{\srec{\vec{\scol\mstr\mval}}}{\mstr_x}\mval &\stepsto &
\srec{\vec{\scol\mstr\mval}}\text,
\mbox{ if }\mstr_x\not\in\vec\mstr
\\
\sdel{\srec{\vec{\scol\mstr\mval},\scol{\mstr_i}{\mval_i},\vec{\scol\mstr\mval}'}}{\mstr_i}
& \stepsto &
\srec{\vec{\scol\mstr\mval},\vec{\scol\mstr\mval}'}
\\
\sdel{\srec{\vec{\scol\mstr\mval}}}{\mstr_x}
&\stepsto&
\srec{\vec{\scol\mstr\mval}}\text,
\mbox{ if }\mstr_x\not\in\vec\mstr
\\
\sif{\strue}{\mcls_1}{\mcls_2} &\stepsto& \mcls_1
\\
\sif{\sfalse}{\mcls_1}{\mcls_2} &\stepsto& \mcls_2
\\
\sseq\mval\mcls &\stepsto& \mcls
\\
\swhile{\mcls_1}{\mcls_2} & \stepsto & 
  \sif{\mcls_1}{\sseq{\mcls_2}{\swhile{\mcls_1}{\mcls_2}}}{\sundefn}
\\
\stryc\mval\mvar\mcls & \stepsto & \mval
\\
\stryf\mval\mcls & \stepsto & \sseq\mcls\mval
\\
\slab\mlab\mval & \stepsto & \mval
\\
\sopn{\mval_1\dots\mval_n} &\stepsto & \delta_n(\mop_n,\mval_1\dots\mval_n)
\\[2mm]
\multicolumn{3}{l}{\mbox{Context-sensitive, store-sensitive rules:}}
\\[1mm]
\langle\msto,\mctx[\mcls]\rangle 
&\stepsto&
\langle\msto,\mctx[\mcls']\rangle\text,
\mbox{ if }\mcls\stepsto\mcls'
\\
\langle\msto,\mctx[\sref\mval]\rangle 
&\stepsto&
\langle\msto[\maddr\mapsto\mval],\mctx[\maddr]\rangle\text,
\mbox{ where }\maddr\notin\dom(\msto)
\\
\langle\msto,\mctx[\sderef\maddr]\rangle 
&\stepsto&
\langle\msto,\mctx[\mval]\rangle\text,
\mbox{ if }\msto(\maddr) = \mval
\\
\langle\msto,\mctx[\sset\maddr\mval]\rangle 
&\stepsto&
\langle\msto[\maddr\mapsto\mval],\mctx[\mval]\rangle 
\\
\langle\msto,\mcctx[\sthrow\mval]\rangle
&\stepsto&
\langle\msto,\serr\mval\rangle
\\
\langle\msto,\mctx[\stryc{ \mcctx[\sthrow\mval] }\mvar{\scls\mexp\menv}]\rangle
&\stepsto&
\langle\msto,\mctx[\scls\mexp{\menv[\mvar\mapsto\mval]}]\rangle
\\
\langle\msto,\mctx[\stryf{ \mcctx[\sthrow\mval] }\mcls]\rangle
&\stepsto&
\langle\msto,\mctx[\sseq\mcls{\sthrow\mval}]\rangle
\\
\langle\msto,\mctx[\slab\mlab{ \mcctx[\sthrow\mval] }]\rangle
&\stepsto&
\langle\msto,\mctx[\sthrow\mval]\rangle
\\
\langle\msto,\mctx[\stryc{ \mcctx[\sbreak\mlab\mval] }\mvar\mcls]\rangle
&\stepsto&
\langle\msto,\mctx[\sbreak\mlab\mval]\rangle
\\
\langle\msto,\mctx[\stryf{ \mcctx[\sbreak\mlab\mval] }\mcls]\rangle
&\stepsto&
\langle\msto,\mctx[\sseq\mcls{\sbreak\mlab\mval}]\rangle
\\
\langle\msto,\mctx[\slab\mlab{ \mcctx[\sbreak\mlab\mval]} ]\rangle
&\stepsto&
\langle\msto,\mctx[\mval]\rangle
\\
\langle\msto,\mctx[\slab{\mlab'}{ \mcctx[\sbreak\mlab\mval] }]\rangle
&\stepsto&
\langle\msto,\mctx[\sbreak\mlab\mval]\rangle\text,
\mbox{ if }\mlab'\not=\mlab
\end{array}
\]
\caption{Reduction semantics for $\lambda\rho_\JS$}
\label{fig:reductions}
\end{figure*}

The complete syntax of values and closures in $\lambda\rho_\JS$ is
given in figure~\ref{fig:syntax-rho}.
The semantics of $\lambda\rho_\JS$ is given in terms of a small-step
reduction relation defined in figure~\ref{fig:reductions}.  There are
four classes of reductions:
\begin{enumerate}
\item context-insensitive, store-insensitive reductions operating
  over closures to implement computations that have no effect on the
  context or store,
\item context-sensitive or store-sensitive reductions operating over
  pairs of stores and programs to implement memory- and
  control-effects,
\item (omitted) reductions propagating environments from closures
  to inner expressions, and
\item (omitted) reductions raising exceptions that represent run-time
  errors such as applying a non-function, branching on a
  non-boolean\footnote{One of JavaScript's quirks are its broad
    definitions of which values act as true and false, a quirk which
    doesn't appear to be modeled here at first glance.  The desugaring
    transformation eliminates this quirk by coercing the condition in
    an {\jscode if} expression.}, indexing into a non-record or with a
  non-string key, etc.  As a result, $\lambda\rho_\JS$ programs do not
  get stuck: either they diverge or result in a value or an uncaught
  exception.
\end{enumerate}

Reduction proceeds by a program being decomposed into a redex and
evaluation context, which represents a portion of program text with a
single hole, written ``$\mhole$''.  The grammar of evaluation
contexts, defined in figure~\ref{fig:contexts}, specifies \emph{where}
in a program reduction may occur.  The notation ``$\mctx[\mcls]$''
denotes both the decomposition of a program into the evaluation
context $\mctx$ with $\mcls$ in the hole and the plugging of $\mcls$
into $\mctx$, which replaces the single hole in $\mctx$ by $\mcls$.
In addition to closures, holes may also be replaced by contexts, which
yields another context.  This is indicated with the notation
``$\mctx[\mctx']$''.

\begin{figure}
\[
\begin{grammar}
\mcctx & \produces & [\;] 
\opor \slet\mvar\mcctx\mcls
\opor \sapp\mcctx\mclss
\opor \sapp\mval{\vec\mval, \mcctx, \vec\mcls}
\opor \srec{\vec{\scol\mstr\mval}, \scol\mstr\mcctx, \vec{\scol\mstr\mcls}}
\opor \sop{\mvals,\mcctx,\mclss}
\\
&\opor & \saref\mcctx\mcls
\opor \saref\mval\mcctx
\opor \saset\mcctx\mcls\mcls
\opor \saset\mval\mcctx\mcls
\opor \saset\mval\mval\mcctx
\opor \sdel\mcctx\mcls
\opor \sdel\mval\mcctx
\\
&\opor & \sref\mcctx
\opor \sderef\mcctx
\opor \sset\mcctx\mcls
\opor \sset\mval\mcctx
\opor \sif\mcctx{\mcls}{\mcls}
\opor \sseq\mcctx\mcls
\opor \sthrow\mcctx
\opor \sbreak\mlab\mcctx
\\[1mm]
\mdctx &\produces &[\;]
\opor \stryf{\mcctx[\mdctx]}\mcls
\opor \slab\mlab{\mcctx[\mdctx]}
\opor \stryc{\mcctx[\mdctx]}\mvar\mcls
\\[1mm]
\mctx &\produces &\mcctx[\mdctx]
\end{grammar}
\]
\caption{Evaluation contexts for $\lambda\rho_\JS$}
\label{fig:contexts}
\end{figure}

There are three classes of evaluation contexts in
figure~\ref{fig:contexts}: \emph{local contexts} $\mcctx$ range over
all contexts that do not include exception handlers, finalizers, or
labels; \emph{control contexts} $\mdctx$ range over contexts that are
either empty or have a outermost exception handler, finalizer, or
label; and \emph{general contexts} $\mctx$ range over all evaluation
contexts.

The distinction is made to describe the behavior of
$\lambda\rho_\JS$'s control constructs: breaks, finalizers, and
exceptions.  When an exception is thrown, the enclosing local context
is discarded and if the nearest enclosing control context is an
exception handler, the thrown value is given to the handler:\footnote{We
omit the store from these examples since they have no effect
upon it.}
\begin{gather*}
\mctx[\stryc{\mcctx[\sthrow\mval]}\mvar{\scls\mexp\menv}]
\xstepsto
\mctx[\scls\mexp{\menv[\mvar\mapsto\mval]}]\text.
\end{gather*}
If the nearest enclosing control context is a finalizer, the
finalization clause is evaluated and the exception is rethrown:
\begin{gather*}
\mctx[\stryf{\mcctx[\sthrow\mval]}\mcls]
\xstepsto
\mctx[\sseq\mcls{\sthrow\mval}]\text.
\end{gather*}
If the nearest enclosing control context is a label, the context
up to and including the label are discarded and the thrown value
continues outward toward the next control context:
\begin{gather*}
\mctx[\slab\mlab{\mcctx[\sthrow\mval]}]
\xstepsto
\mctx[\sthrow\mval]\text.
\end{gather*}
Finally, if there is no enclosing control context, the exception
was not handled and the program has resulted in an error:
\begin{gather*}
\mcctx[\sthrow\mval] \xstepsto \serr\mval\text.
\end{gather*}

Breaks are handled in a similar way, except local contexts are
discarded until the matching label are found.  In the case of
finalizers, the finalization clause is run, followed by reinvoking the
break.

The result of a computation is an \emph{answer}, which consists of a
store and either a value or error, indicating an uncaught
exception:
\begin{gather*}
\mans \produces  \langle\msto,\mval\rangle
\opor  \langle\msto,\serr\mval\rangle
\end{gather*}

The \emph{evaluation} of a program is defined by a partial function
relating programs to answers:
\[
\evf(\expr) = \mans\text{ if } \inj_\JS(\expr) \xmultistepsto \mans\text{, for some }\mans\text,
\]
where $\xmultistepsto$ denotes the reflexive, transitive closure of
the reduction relation defined in figure~\ref{fig:reductions}
and
\[
\inj_\JS(\mexp) = \langle\mtsto,\scls\mexp\mtenv\rangle\text.
\]


Having established the syntax and semantics of the
$\lambda\rho_\JS$-calculus, we now relate it to Guha \etal's
$\lambda_\JS$-calculus.

\subsection{Correspondence with $\boldsymbol{\lambda_\JS}$}
\label{sec:correspondence}

We have developed an explicit substitution variant of $\lambda_\JS$ in
order to derive an environment-based abstract machine, which as we
will see, is important for the subsequent abstract interpretation.
However, let us briefly discuss this new calculus's relation to
$\lambda_\JS$ and establish their correspondence so that we can rest
assured that our analytic framework is really reasoning about
$\lambda_\JS$ programs.

Our presentation of evaluation contexts for $\lambda\rho_\JS$ closely
follows Guha,~\etal\ There are two important differences.
\begin{enumerate}

\item The grammar of evaluation contexts for $\lambda_\JS$ makes a
  distinction between local contexts including labels, and local
  contexts including exception handlers.  Let $\mfctx$ and $\mgctx$
  denote such contexts, respectively:
  \[
  \begin{grammar}
    \mfctx & \produces & \mcctx \opor \mcctx[\slab\mlab\mfctx]\\
    \mgctx & \produces & \mcctx \opor \mcctx[\stryc\mgctx\mvar\mcls]\text.
  \end{grammar}
  \]
  The distinction allows for exceptions to effectively jump over
  enclosing labels and for breaks to jump over handlers in one step of
  reduction:
  \begin{gather*}
    \mctx[\stryc{\mfctx[\sthrow\mval]}\mvar{\scls\mexp\menv}]    
    \xstepsto
    \mctx[\scls\mexp{\menv[\mvar\mapsto\mval]}]\text,
  \end{gather*}
  and
  \begin{align*}
    \mctx[\slab{\mlab'}{\mgctx[\sbreak\mlab\mval]}]    
    &\xstepsto \mctx[\mval] \mbox{, if }\mlab'=\mlab\\
    &\xstepsto \mctx[\sbreak\mlab\mval] \mbox{, otherwise.}
  \end{align*}

  It should be clear that our notion of reduction can simulate the
  above one-step reductions in one or more steps corresponding to the
  number of labels (exception handlers) in $\mfctx$ (in $\mgctx$).
  We adopt our single notion of label and handler free local contexts
  in order to simplify the abstract machine in the subsequent section.

\item The grammar of evaluation of contexts for $\lambda_\JS$
  mistakenly does not include break contexts in the set of local
  contexts, causing break expressions within break expression to get
  stuck, falsifying the soundness theorem.  The mistake is minor and
  easily fixed.  When relating $\lambda\rho_\JS$ to $\lambda_\JS$ we
  assume this correction has been made.
\end{enumerate}

We write \(\lambda_\JS\) and \(\lambda\rho_\JS\) over a reduction
relation to denote the (omitted) one-step reduction relation as given
by Guha,~\etal, corrected as described above, and the one-step
reduction as defined in figure~\ref{fig:reductions}, respectively.

The results of $\lambda_\JS$ and $\lambda\rho_\JS$ evaluation are
related by a function $\UNLD$ that recursively forces the all of the
delayed substitutions represented by an environment \cite[\S
  2.5]{dvanhorn:Biernacka2007Concrete}, thus mapping a value to a
syntactic value.  It is the identity function on syntactic values; for
answers, functions, and records it is defined as:
\begin{align*}
\UNLD(\langle\msto,\mval\rangle) &= \langle\msto,\UNLD(\mval)\rangle\\
\UNLD(\langle\msto,\serr\mval\rangle) &= \langle\msto,\serr{\UNLD(\mval)}\rangle\\
\UNLD\scls{\sfunc\mvars\mexp}{\{(\mvar_0,\mval_0),\dots,(\mvar_n,\mval_n)\}}
& = \sfunc\mvar{[\UNLD(\mval_0)/\mvar_0,\dots,\UNLD(\mval_n)/\mvar_n]\mexp}\\
\UNLD\scls{\srec{\vec{\scol\mstr\mval}}}{\{(\mvar_0,\mval_0),\dots,(\mvar_n,\mval_n)\}}
& = \srec{ \vec{\scol\mstr{[\UNLD(\mval_0)/\mvar_0,\dots,\UNLD(\mval_n)/\mvar_n]\mval}} }\text.
\end{align*}
We can now formally state the calculi's correspondence:
\begin{lemma}[Correspondence]
\label{lem:correspondence}
For all programs $\mexp$,
\begin{gather*}
\langle\mtsto,\mexp\rangle \xmultistepsto[\lambda_\JS] \mans \iff
\inj_\JS(\mexp)
\xmultistepsto[\lambda\rho_\JS] \mans'\text,
\end{gather*}
where $\mans = \UNLD(\mans')$.
\end{lemma}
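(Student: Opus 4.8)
The plan is to set up a weak bisimulation between the two reduction relations modulo the unloading map $\UNLD$, using that both relations are deterministic (unique decomposition into evaluation context and redex, up to the choice of fresh addresses). First I would extend $\UNLD$ from values and answers to \emph{all} closures, to evaluation contexts, and pointwise on the codomain to stores: on $\scls\mexp\menv$ it forces the environment, yielding the $\lambda_\JS$-term $[\UNLD(\menv(\mvar_0))/\mvar_0,\dots,\UNLD(\menv(\mvar_n))/\mvar_n]\mexp$, and it recurs structurally through the remaining syntax, so that $\UNLD(\mcctx)$, $\UNLD(\mdctx)$, $\UNLD(\mctx)$ are again local, control, and general $\lambda_\JS$ evaluation contexts of the same class. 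Totality of $\UNLD$ follows by structural induction on the argument, counting the finite ranges of environments as subterms; there is nothing cyclic to unload, since addresses are passed through untouched and (absent \texttt{letrec}, in the call-by-value setting) environments are acyclic. The technical core is then a single commutation lemma, $\UNLD(\mctx[\mcls]) = \UNLD(\mctx)[\UNLD(\mcls)]$, i.e.\ $\UNLD$ is a homomorphism for plugging, together with its corollary that $\UNLD(\scls\mexp\menv)$ equals the result of performing on $\mexp$ the iterated meta-level substitution that $\menv$ denotes. Both go by induction on the context, respectively on $\mexp$. A base-case computation records $\UNLD(\inj_\JS(\mexp)) = \langle\mtsto,\mexp\rangle$ for a closed program $\mexp$.

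Next I would prove the simulation $\lambda_\JS \Rightarrow \lambda\rho_\JS$: whenever $\langle\UNLD(\msto),\UNLD(\mcls)\rangle \xstepsto[\lambda_\JS] \mans$, the configuration $\langle\msto,\mcls\rangle$ reaches in one or more $\lambda\rho_\JS$ steps some $\langle\msto',\mcls'\rangle$ with $\UNLD(\langle\msto',\mcls'\rangle) = \mans$. The realizing sequence consists of finitely many (omitted) environment-propagation and variable-lookup steps that expose the redex, followed by the matching $\lambda\rho_\JS$ rule of figure~\ref{fig:reductions}, with the label-/handler-peeling rules iterated as many times as the one big control jump of $\lambda_\JS$ demands; the commutation lemma and the context-class invariant align $\lambda_\JS$'s $\mfctx$/$\mgctx$ contexts with $\lambda\rho_\JS$'s $\mcctx$ plus a stack of peeled frames. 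A final \emph{readback} step handles answers: if $\UNLD(\mcls)$ is a syntactic value then, since the administrative sub-relation (environment propagation plus variable lookup) is strongly normalizing and its normal forms whose unloading is a value are exactly $\lambda\rho_\JS$ values, $\langle\msto,\mcls\rangle \xmultistepsto[\lambda\rho_\JS] \langle\msto,\mval\rangle$ for a $\lambda\rho_\JS$ value $\mval$ with $\UNLD(\mval) = \UNLD(\mcls)$ (and likewise for $\serr\mval$). Iterating over an entire $\lambda_\JS$ reduction from $\langle\mtsto,\mexp\rangle$ and unloading the final configuration yields: $\langle\mtsto,\mexp\rangle \xmultistepsto[\lambda_\JS] \mans$ implies $\inj_\JS(\mexp) \xmultistepsto[\lambda\rho_\JS] \mans'$ with $\UNLD(\mans') = \mans$.

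For the full biconditional I would lean on determinism and on $\lambda_\JS$'s progress property (Guha~\etal, corrected version): $\lambda_\JS$-reduction from $\langle\mtsto,\mexp\rangle$ either reaches an answer or diverges. The $(\Rightarrow)$ direction is exactly the implication just established. For $(\Leftarrow)$, suppose $\inj_\JS(\mexp) \xmultistepsto[\lambda\rho_\JS] \mans'$; since $\lambda\rho_\JS$ is deterministic and $\mans'$ is a normal form, this is \emph{the} reduction from $\inj_\JS(\mexp)$ and it is finite, so by the $\lambda_\JS \Rightarrow \lambda\rho_\JS$ simulation $\lambda_\JS$ cannot diverge (divergence would force an infinite $\lambda\rho_\JS$ reduction). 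Hence $\langle\mtsto,\mexp\rangle \xmultistepsto[\lambda_\JS] \mans$ for some answer $\mans$, and applying $(\Rightarrow)$ again produces a $\lambda\rho_\JS$ answer whose unloading is $\mans$, which by determinism must equal $\mans'$. Therefore $\UNLD(\mans') = \mans$, as required.

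The step I expect to be the main obstacle is the simulation bookkeeping of the second paragraph. A single $\lambda_\JS$ transition is realized by an entire \emph{block} of $\lambda\rho_\JS$ steps --- administrative propagation, variable lookups, and the peeling of arbitrarily many intervening labels or handlers --- so $\UNLD$ does \emph{not} carry the $\lambda\rho_\JS$ reduction onto the $\lambda_\JS$ reduction step for step: the configuration reached after an individual peel unloads to a $\lambda_\JS$-term that $\lambda_\JS$ itself never visits, because $\lambda_\JS$ absorbs those labels into $\mfctx$/$\mgctx$. Getting the accounting right --- batching each such block so that its $\UNLD$-image is exactly one $\lambda_\JS$ step, and verifying this for every control-effect rule (throw and break escaping $\mcctx$, running finalizers, matching or skipping labels) while maintaining the local/control/general context-class invariant --- is where the real work lies; once that invariant and the commutation lemma are in hand, the individual cases are routine.
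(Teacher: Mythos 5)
Your proposal is correct and follows essentially the same route as the paper, which proves the lemma by adapting Biernacka and Danvy's correspondence argument between the $\lambda$-calculus and Curien's calculus of explicit substitutions: an unloading map extended to contexts, a substitution/plugging commutation lemma, and a simulation in which one $\lambda_\JS$ control step is matched by a block of $\lambda\rho_\JS$ steps (the paper itself notes in section~\ref{sec:correspondence} that one $\lambda_\JS$ jump over $\mfctx$/$\mgctx$ corresponds to several $\lambda\rho_\JS$ reductions). The paper gives only a proof sketch by citation, so your more detailed plan, including the batching bookkeeping you flag as the main obstacle, is a faithful elaboration of the intended argument.
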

\begin{proof}(Sketch.)  
The proof follows 
the structure of Biernacka and
Danvy's~\cite{dvanhorn:Biernacka2007Concrete} proof of correspondence
for the $\lambda$-calculus and Curien's $\lambda\rho$-calculus of
explicit substitutions~\cite{dvanhorn:Curien1991Abstract},
straightforwardly extended to $\lambda_\JS$ and $\lambda\rho_\JS$.
\end{proof}

We have now established our semantic basis: a calculus of explicit
substitutions corresponding to $\lambda_\JS$, which is a model
adequate for all of JavaScript minus {\tt eval}.  In the following
section, we apply the syntactic correspondence to derive a
correct-by-construction environment-based abstract machine.

\section{The JavaScript Abstract Machine (JAM)}
\label{sec:jam}

In the section, we present the JAM: the JavaScript Abstract Machine.
The abstract machine takes the form of a first-order state transition
system that operates over triples consisting of a store, a closure,
and a control stack, represented by a list of evaluation contexts.
There are three classes of transitions for the machine: those that
evaluate, those that continue, and those that apply.
\begin{description}
\item[evaluate:] Evaluation transitions operate over triples
  dispatching on the closure component.  The eval transitions
  implement a search for a redex or a value.  If the closure is a
  value, then the search for a value is complete; the machine
  transitions to a continue state to plug the value into its context.
  Alternatively, if the closure is a redex, then the search is also
  complete and the machine transitions to an apply state to contract
  the redex.  Finally, if the closure is a neither a redex nor a
  value, the search continues; the machine selects the next closure to
  search and pushes a single evaluation context on to the control
  stack.

\item[continue:] Continuation transitions operate over triples where
  the closure component is always a value, dispatching on the top
  evaluation context.  The value is being plugged into the context
  represented by the stack.  If plugging the value into the context
  results in a redex, the machine transitions to an apply state.  If
  plugging the value reveals the next closure that needs to be
  evaluated, the machine transitions to an evaluate state.  If
  plugging the value in turn results in \emph{another} value, the
  machine transitions to a continue state to plug that value.
  Finally, if both the control stack is empty, the result of the
  program has been reached and the machine halts with the answer.

\item[apply:] Application transitions operate over triples where the
  closure component is always a redex.  These transitions dispatch
  (mostly) on the redex and serve to contract it, thus implementing
  the reduction relation of figure~\ref{fig:reductions}.  Since
  reductions are potentially store- and context-sensitive, the
  transitions may also dispatch on the control continuation in order
  to implement the control operators.
\end{description}

\begin{figure}[t]
\[
\begin{array}{lcl}
\langle\msto,(\mvar,\menv),\mETX\erangle
&\xstepsto&
\langle\msto,(\mvar,\menv),\mETX\arangle
\\
\langle\msto,\mval,\mETX\erangle
&\xstepsto&
\langle\msto,\mval,\mETX\crangle
\\
\langle\msto,\srec{\scol\mstr\mcls,\dots},\mETX\erangle
&\xstepsto&
\langle\msto,\mcls,\cons{\srec{\scol\mstr\mhole,\dots}}\mETX\erangle
\\
\langle\msto,\slet\mvar{\mcls}{\mclso},\mETX\erangle
&\xstepsto&
\langle\msto,\mcls,\cons{\slet\mvar\mhole\mclso}\mETX\erangle
\\
\langle\msto,\sapp\mcls{\vec\mclso},\mETX\erangle
&\xstepsto&
\langle\msto,\mcls,\cons{\sapp\mhole{\vec\mcls}}\mETX\erangle
\\
\langle\msto,\saref{\mcls}{\mclso},\mETX\erangle
&\xstepsto&
\langle\msto,\mcls,\cons{\saref\mhole\mclso}\mETX\erangle
\\
\langle\msto,\saset\mcls\mclso{\mclso'},\mETX\erangle
&\xstepsto&
\langle\msto,\mcls,\cons{\saset\mhole\mclso{\mclso'}}\mETX\erangle
\\
\langle\msto,\sdel\mcls\mclso,\mETX\erangle
&\xstepsto&
\langle\msto,\mcls,\cons{\sdel\mhole{\mclso}}\mETX\erangle
\\
\langle\msto,\sset\mcls\mclso,\mETX\erangle
&\xstepsto&
\langle\msto,\mcls,\cons{\sset\mhole\mclso}\mETX\erangle
\\
\langle\msto,\sref\mcls,\mETX\erangle
&\xstepsto&
\langle\msto,\mcls,\cons{\sref\mhole}\mETX\erangle
\\
\langle\msto,\sderef\mcls,\mETX\erangle
&\xstepsto&
\langle\msto,\mcls,\cons{\sderef\mhole}\mETX\erangle
\\
\langle\msto,\sif{\mcls}{\mclso}{\mclso'},\mETX\erangle
&\xstepsto&
\langle\msto,\mcls,\cons{\sif\mhole\mclso{\mclso'}}\mETX\erangle
\\
\langle\msto,\sseq{\mcls}{\mclso},\mETX\erangle
&\xstepsto&
\langle\msto,\mcls,\cons{\sseq\mhole\mclso}\mETX\erangle
\\
\langle\msto,\swhile{\mcls}{\mclso},\mETX\erangle
&\xstepsto&
\langle\msto,\mcls, \cons{\sif\mhole{\sseq\mclso{\swhile\mcls\mclso}}\sundefn}\mETX \erangle
\\
\langle\msto,\slab\mlab\mcls,\mETX\erangle
&\xstepsto&
\langle\msto,\mcls,\cons{\slab\mlab\mhole}\mETX\erangle
\\
\langle\msto,\sbreak\mlab\mcls,\mETX\erangle
&\xstepsto&
\langle\msto,\mcls,\sbreak\mlab\mhole::\mETX \erangle
\\
\langle\msto,\stryc{\mcls}\mvar{\mclso},\mETX\erangle
&\xstepsto&
\langle\msto,\mcls,\cons{\stryc{\mhole}\mvar{\mclso}}\mETX\erangle
\\
\langle\msto,\stryf{\mcls}{\mclso},\mETX\erangle
&\xstepsto&
\langle\msto,\mcls,\cons{\stryf{\mhole}{\mclso}}\mETX\erangle
\\
\langle\msto,\sthrow\mcls,\mETX\erangle
&\xstepsto&
\langle\msto,\mcls,\cons{\sthrow\mhole}\mETX\erangle
\\
\langle\msto,\sop{\mcls,\dots},\mETX\erangle
&\xstepsto&
\langle\msto,\mcls,\cons{\sop{\mhole,\dots}}\mETX\erangle
\end{array}
\]
\caption{Evaluation transitions}
\end{figure}

\begin{figure}[t]
\[
\begin{array}{lcl}
\langle\msto,\mval,\nil\crangle
&\xstepsto&
\langle\mval,\msto\rangle
\\
\langle\msto,\mval,\cons{\slet\mvar\mhole\mcls}\mETX\crangle
&\xstepsto&
\langle\msto,\slet\mvar\mval\mcls,\mETX\arangle
\\
\langle\msto,\mval,\cons{\sapp\mhole\relax}\mETX\crangle
&\xstepsto&
\langle\msto,\sapp\mval\relax,\mETX\arangle
\\
\langle\msto,\mval,\cons{\sapp\mhole{\mcls,\dots}}\mETX\crangle
&\xstepsto&
\langle\msto,\mcls,\cons{\sapp\mval{\mhole,\dots}}\mETX\erangle
\\
\langle\msto,\mval,\cons{\sapp\mvaloo{\mvalo,\dots,\mhole}}\mETX\crangle
&\xstepsto&
\langle\msto,\sapp\mvaloo{\mvalo,\dots,\mval},\mETX\arangle
\\
\langle\msto,\mval,\cons{\sapp\mvaloo{\mvalo,\dots,\mhole,\mcls,\dots}}\mETX\crangle
&\xstepsto&
\langle\msto,\mcls,\cons{\sapp\mvaloo{\mvalo,\dots,\mval,\mhole,\dots}}\mETX\erangle
\\
\langle\msto,\mval,\cons{\srec{\scol{\mstr_1}\mvalo,\dots,\scol{\mstr_n}\mhole}}\mETX\crangle
&\xstepsto&
\langle\msto,\srec{\scol{\mstr_1}\mvalo,\dots,\scol{\mstr_n}\mval},\mETX\crangle
\\
\langle\msto,\mval,\cons{\srec{\scol{\mstr_1}\mvalo,\dots,\scol{\mstr_i}\mhole,\scol{\mstr_{i+1}}\mcls,\dots}}\mETX\crangle
&\xstepsto&
\langle\msto,\mcls,\cons{\srec{\scol{\mstr_1}\mvalo,\dots,\scol{\mstr_i}\mval,\scol{\mstr_{i+1}}\mhole,\dots}}\mETX\erangle
\\
\langle\msto,\mval,\cons{\saref\mhole\mcls}\mETX\crangle
&\xstepsto&
\langle\msto,\mcls,\cons{\saref\mval\mhole}\mETX\erangle
\\
\langle\msto,\mval,\cons{\saref\mvalo\mhole}\mETX\crangle
&\xstepsto&
\langle\msto,\saref\mvalo\mval,\mETX\arangle
\\
\langle\msto,\mval,\cons{\saset\mhole\mcls\mclso}\mETX\crangle
&\xstepsto&
\langle\msto,\mcls,\cons{\saset\mval\mhole\mclso}\mETX\erangle
\\
\langle\msto,\mval,\cons{\saset\mvalo\mhole\mcls}\mETX\crangle
&\xstepsto&
\langle\msto,\mcls,\cons{\saset\mvalo\mval\mhole}\mETX\erangle
\\
\langle\msto,\mval,\cons{\saset\mvalo\mvaloo\mhole}\mETX\crangle
&\xstepsto&
\langle\msto,\saset\mvalo\mvaloo\mval,\mETX\arangle
\\
\langle\msto,\mval,\cons{\sdel\mhole\mcls}\mETX\crangle
&\xstepsto&
\langle\msto,\mcls,\cons{\sdel\mval\mhole}\mETX\erangle
\\
\langle\msto,\mval,\cons{\sdel\mvalo\mhole}\mETX\crangle
&\xstepsto&
\langle\msto,\sdel\mvalo\mval,\mETX\arangle
\\
\langle\msto,\mval,\cons{\sref\mhole}\mETX\crangle
&\xstepsto&
\langle\msto,\sref\mval,\mETX\arangle
\\
\langle\msto,\mval,\cons{\sderef\mhole}\mETX\crangle
&\xstepsto&
\langle\msto,\sderef\mval,\mETX\arangle
\\
\langle\msto,\mval,\cons{\sset\mhole\mcls}\mETX\crangle
&\xstepsto&
\langle\msto,\mcls,\cons{\sset\mval\mhole}\mETX\erangle
\\
\langle\msto,\mval,\cons{\sset\mvalo\mhole}\mETX\crangle
&\xstepsto&
\langle\msto,\sset\mvalo\mval,\mETX\arangle
\\
\langle\msto,\mval,\cons{\sif\mhole{\mcls}{\mclso}}\mETX\crangle
&\xstepsto&
\langle\msto,\sif{\mval}{\mcls}{\mclso},\mETX\arangle
\\
\langle\msto,\mval,\cons{\sseq\mhole\mcls}\mETX\crangle
&\xstepsto&
\langle\msto,\mcls,\mETX\erangle
\\
\langle\msto,\mval,\cons{\sthrow\mhole}\mETX\crangle
&\xstepsto&
\langle\msto,\sthrow\mval,\mETX\arangle
\\
\langle\msto,\mval,\cons{\sbreak\mlab\mhole}\mETX\crangle
&\xstepsto&
\langle\msto,\sbreak\mlab\mval,\mETX\arangle
\\
\langle\msto,\mval,\cons{\sop{\mvalo,\dots,\mhole}}\mETX\crangle
&\xstepsto&
\langle\msto,\sop{\mvalo,\dots,\mval},\mETX\arangle
\\
\langle\msto,\mval,\cons{\sop{\mvalo,\dots,\mhole,\mcls,\dots}}\mETX\crangle
&\xstepsto&
\langle\msto,\mcls,\cons{\sop{\mvalo,\dots,\mval,\mhole,\dots}}\mETX\erangle
\end{array}
\]
\caption{Continuation transitions}
\end{figure}

\begin{figure}[t]
\[
\begin{array}{lcl}
\langle\msto,(\mvar,\menv),\mETX\arangle
&\xstepsto&
\langle\msto,\mval,\mETX\crangle\mbox{ if }\mval\in\getf\msto\maddr 
\\
\langle\msto,\slet\mvar\mval\mcls,\mETX\arangle
&\xstepsto&
\langle\putf\msto\maddr\mval,\scls\mexp{\menv[\mvar\mapsto\maddr]},\mETX\erangle
\\
&&\mbox{where }\maddr=\allocf\state
\\
\langle\msto,\sapp{\scls{\sfunc\mvars\mexp}\menv}\mvals,\mETX\arangle
&\xstepsto&
\langle\putf\msto\maddrs\mvals,(\mexp,\menv[\mvars\mapsto\maddrs]),\mETX\erangle
\\
&&\mbox{if }|\mvars|=|\mvals|, \mbox{ where }\maddrs=\allocf\state
\\
\langle\msto,\saref{\srec{\vec{\scol\mstr\mval},\scol{\mstr_i}\mval,\vec{\scol\mstr\mval}'}}{\mstr_i},\mETX\arangle
&\xstepsto&
\langle\msto,\mval,\mETX\crangle
\\
\langle\msto,\saref{\srec{\vec{\scol\mstr\mval}}}{\mstr_x},\mETX\arangle
&\xstepsto&
\langle\msto,\sundefn,\mETX\crangle\mbox{ if }\mstr_x\notin\vec\mstr
\\
\langle\msto,\saset{\srec{\vec{\scol\mstr\mval},\scol{\mstr_i}{\mval_i},\vec{\scol\mstr\mval}'}}{\mstr_i}\mval,\mETX\arangle
&\xstepsto&
\langle\msto,\srec{\vec{\scol\mstr\mval},\scol{\mstr_i}{\mval},\vec{\scol\mstr\mval}'},\mETX\crangle
\\
\langle\msto,\sdel{\srec{\vec{\scol\mstr\mval},\scol{\mstr_i}{\mval_i},\vec{\scol\mstr\mval}'}}{\mstr_i},\mETX\arangle
&\xstepsto&
\langle\msto,\srec{\vec{\scol\mstr\mval},\vec{\scol\mstr\mval}'},\mETX\crangle
\\
\langle\msto,\sdel{\srec{\vec{\scol\mstr\mval}}}{\mstr_x},\mETX\arangle
&\xstepsto&
\langle\msto,\srec{\vec{\scol\mstr\mval}},\mETX\crangle\mbox{ if }\mstr_x\notin\vec\mstr
\\
\langle\msto,\sif\strue\mcls\mclso,\mETX\arangle
&\xstepsto&
\langle\msto,\mcls,\mETX\erangle
\\
\langle\msto,\sif\sfalse\mcls\mclso,\mETX\arangle
&\xstepsto&
\langle\msto,\mclso,\mETX\erangle
\\
\langle\msto,\sopn{\mval_1,\dots,\mval_n},\mETX\crangle
&\xstepsto&
\langle\msto,\mval,\mETX\crangle\mbox{ if }\delta(\mop_n,\mval_1,\dots,\mval_n)=\mval
\\
\langle\msto,\sref\mval,\mETX\arangle
&\xstepsto&
\langle\putf\msto\maddr\mval,\maddr,\mETX\crangle
\\
&&\mbox{where }\maddr=\allocf\state
\\
\langle\msto,\sderef\maddr,\mETX\arangle
&\xstepsto&
\langle\msto,\mval,\mETX\crangle\mbox{ if }\mval\in\getf\msto\maddr 
\\
\langle\msto,\sset\maddr\mval,\mETX\arangle
&\xstepsto&
\langle\putf\msto\maddr\mval,\mval,\mETX\crangle
\\
\langle\msto,\sthrow\mval,\nil\rangle
&\xstepsto&
\langle \serr\mval,\msto\rangle
\\
\langle\msto,\sthrow\mval,\cons{\stryc\mhole\mvar{\scls\mexp\menv}}\mETX\arangle
&\xstepsto&
\langle\putf\msto\maddr\mval,\scls\mexp{\menv[\mvar\mapsto\maddr]},\mETX\erangle
\\
&&\mbox{where }\maddr=\allocf\state
\\
\langle\msto,\sthrow\mval,\cons{\stryf\mhole\mcls}\mETX\arangle
&\xstepsto&
\langle\msto,\sseq\mcls{\sthrow\mval},\mETX\erangle
\\
\langle\msto,\sthrow\mval,\cons{\slab\mlab\mhole}\mETX\arangle
&\xstepsto&
\langle\msto,\sthrow\mval,\mETX\arangle
\\
\langle\msto,\sthrow\mval,\cons\mcctx\mETX\arangle
&\xstepsto&
\langle\msto,\sthrow\mval,\mETX\arangle 
\\
\langle\msto,\sbreak\mlab\mval,\cons{\stryc\mvar\mhole\mcls}\mETX\arangle
&\xstepsto&
\langle\msto,\sbreak\mlab\mval,\mETX\erangle
\\
\langle\msto,\sbreak\mlab\mval,\cons{\stryf\mhole\mcls}\mETX\arangle
&\xstepsto&
\langle\msto,\sseq\mcls{\sbreak\mlab\mval},\mETX\erangle
\\
\langle\msto,\sbreak\mlab\mval,\cons{\slab\mlab\mhole}\mETX \arangle
&\xstepsto&
\langle\msto,\mval,\mETX\crangle
\\
\langle\msto,\sbreak\mlab\mval,\cons{\slab{\mlab'}\mhole}\mETX \arangle
&\xstepsto&
\langle\msto,\mval,\mETX\crangle\mbox{ if }\mlab\neq\mlab'
\\
\langle\msto,\sbreak\mlab\mval,\cons\mcctx\mETX\arangle
&\xstepsto&
\langle\msto,\sbreak\mlab\mval,\mETX\arangle
\end{array}
\]
\caption{Application transitions}
\end{figure}

The machine relies on three functions for interacting with the store:
\begin{align*}
\allocff &: \s{State} \rightarrow \s{Address}^n\\
\putff &: \s{Store} \times \s{Address} \times \s{Value} \rightarrow \s{Store}\\
\getff &: \s{Store} \times \s{Address} \rightarrow \mathcal{P}(\s{Value})
\end{align*}
The $\allocff$ function makes explicit the non-deterministic choice of
addresses when allocating space in the store.  It returns a vector of
addresses, often a singleton, based on the current state of the
machine.  For the moment, all that we require of $\allocff$ is that it
return a suitable number of addresses and that none are in use in the
store.  The $\putff$ function updates a store location and is defined
as:
\begin{align*}
\putf\msto\maddrs\mvals = \msto[\maddrs\mapsto\mvals]\text,
\end{align*}
and the $\getff$ function retrieves a value from a store location as a
singleton set:
\begin{align*}
\getf\msto\maddr = \{\msto(\maddr)\}\text.
\end{align*}
We make explicit the use of these three functions because they will
form the essential mechanism of abstracting the machine in the
subsequent section; the slightly strange definition for $\getff$ is to
facilitate approximation where there may be multiple values residing
at a store location.

The initial machine configuration is an evaluation state consisting of
the empty store, the program, the empty environment, and the empty
control and local continuation:
\begin{gather*}
\inj_\JAM(\mexp) = \langle\mtsto,\scls\mexp\mtenv,\nil\erangle\text.
\end{gather*}
Final configurations are answers, just as in the reduction semantics.

\subsection{Reformulation of reduction semantics}

Unfortunately, there is an immediate problem with the described
approach when applied to the JavaScript abstract machine.  The problem
stems from the JAM having two control stacks.  Consequently, when
abstracting we arrive at a two-stack pushdown machine, which in
general has the power to simulate a Turing-machine.  However this
problem can be overcome: the JAM can be reformulated into a single
stack machine in such a way that preserves correctness and enables a
pushdown abstraction that is decidable.

One of the lessons of our abstract machine-based approach to analysis
is that many problems in program analysis can be solved at the
semantic level and then imported systematically to the analytic side.
So for example, abstract garbage
collection~\cite{mattmight:Might:2008:Exploiting} can be expressed as
concrete garbage collection with the pointer refinement and store
abstraction applied~\cite{dvanhorn:VanHorn2010Abstracting}.
Similarly, the exponential complexity of $k$-CFA can be avoided by
concretely changing the representation of closures and then
abstracting in an unremarkable way~\cite{dvanhorn:Might2010Resolving}.

We likewise solve our two-stack problem by a reformulation at the
level of the reduction semantics for $\lambda\rho_\JS$ and then repeat
the refocusing construction to derive a one-stack variant of the JAM.

\begin{figure}
\[
\begin{array}{rcl}
\langle\msto,\sthrow\mval\rangle
&\stepsto&
\langle\msto,\serr\mval\rangle
\\
\langle\msto,\mctx[\msctx[\sthrow\mval]]\rangle
&\stepsto&
\langle\msto,\mctx[\sthrow\mval]\rangle
\\
\langle\msto,\mctx[\stryc{ \sthrow\mval }\mvar{\scls\mexp\menv}]\rangle
&\stepsto&
\langle\msto,\mctx[\scls\mexp{\menv[\mvar\mapsto\mval]}]\rangle
\\
\langle\msto,\mctx[\stryf{ \sthrow\mval }\mcls]\rangle
&\stepsto&
\langle\msto,\mctx[\sseq\mcls{\sthrow\mval}]\rangle
\\
\langle\msto,\mctx[\slab\mlab{ \sthrow\mval }]\rangle
&\stepsto&
\langle\msto,\mctx[\sthrow\mval]\rangle
\\
\langle\msto,\mctx[\msctx[\sbreak\mlab\mval]]\rangle
&\stepsto&
\langle\msto,\mctx[\sbreak\mlab\mval]\rangle
\\
\langle\msto,\mctx[\stryc{ \sbreak\mlab\mval }\mvar\mcls]\rangle
&\stepsto&
\langle\msto,\mctx[\sbreak\mlab\mval]\rangle
\\
\langle\msto,\mctx[\stryf{ \sbreak\mlab\mval }\mcls]\rangle
&\stepsto&
\langle\msto,\mctx[\sseq\mcls{\sbreak\mlab\mval}]\rangle
\\
\langle\msto,\mctx[\slab\mlab{ \sbreak\mlab\mval } ]\rangle
&\stepsto&
\langle\msto,\mctx[\mval]\rangle
\\
\langle\msto,\mctx[\slab{\mlab'}{ \sbreak\mlab\mval }]\rangle
&\stepsto&
\langle\msto,\mctx[\sbreak\mlab\mval]\rangle\text,
\mbox{ if }\mlab'\not=\mlab
\end{array}
\]
\caption{Reformulated reduction semantics}
\label{fig:reductions-reformulated}
\end{figure}

The basic reason for maintaining the control and local stack is to
allow jumps over the local context whenever a control operator is
invoked.  This is seen in the reduction semantics with reductions such
as this:
\begin{align*}
\mctx[\slab{\mlab'}{\mcctx[\sbreak\mlab\mval]}]
&\xstepsto \mctx[\mval] \mbox{ if }\mlab'=\mlab\\
&\xstepsto \mctx[\sbreak\mlab\mval] \mbox{ if }\mlab'\not=\mlab
\end{align*}
To enable a single stack, we simulate this jump over the local context
by ``bubbling'' over it in a piecemeal fashion.  This is accomplished
by defining a notion of single, non-empty local context frames
$\msctx$, \ie,
\[
\begin{grammar}
\msctx &\produces& \slet\mvar\mhole\mcls
\opor \sapp\mhole\mclss
\opor \sapp\mcls{\mvals,\mhole,\mclss}
\opor \cdots
\opor \sbreak\mlab\mhole
\opor \sop{\mvals,\mhole,\mclss}
\end{grammar}
\]
then the reduction relation for control operators remains context
sensitive, but does not operate over whole contexts, but just the
enclosing frame, which can then be implemented with a stack.  The
rules for simulating the above reduction are then:
\begin{align*}
\mctx[\msctx[\sbreak\mlab\mval]] &\xstepsto \mctx[\sbreak\mlab\mval]
\\
\mctx[\slab{\mlab'}{\sbreak\mlab\mval}]
&\xstepsto
\mctx[\mval]\mbox{ if }\mlab'=\mlab
\\
& \xstepsto
\mctx[\sbreak\mlab\mval]\mbox{ if }\mlab'\not=\mlab
\end{align*}
Clearly, these reductions simulate the original single reduction by a
number of steps that corresponds to the number of local frames between
the label and the break.

The complete replacement of the context-sensitive reductions is given
in figure~\ref{fig:reductions-reformulated}.  We refer to this
alternative reduction semantics as $\lambda\rho'_\JS$.

\begin{lemma}
For all programs $\mexp$,
\begin{gather*}
\inj_\JS(\mexp)
\xmultistepsto[\lambda\rho_\JS] \mans \iff
\inj_\JS(\mexp)
\xmultistepsto[\lambda\rho'_\JS] \mans\text.
\end{gather*}

\end{lemma}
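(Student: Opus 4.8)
The plan is to prove a step-wise correspondence between the two systems and then lift it through the reflexive--transitive closures. Both $\lambda\rho_\JS$ and $\lambda\rho'_\JS$ share every rule except the control rules for throws and breaks, which figure~\ref{fig:reductions-reformulated} replaces, so the argument is entirely localised to those rules. The structural fact I would isolate first is that every local context factors uniquely as a finite composition of single non-empty local frames, $\mcctx = \msctx_1 \circ \cdots \circ \msctx_n$ with $n \ge 0$ (and $\mcctx = [\;]$ when $n = 0$); this is immediate from the grammar of figure~\ref{fig:contexts} together with the definition of $\msctx$. Crucially, $\msctx$ never ranges over exception-handler, finalizer, or label frames, so this factorisation ``stops'' exactly at the nearest enclosing control frame (or at the root of the program).

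For the forward direction I would show that every single $\lambda\rho_\JS$-reduction is realised by one or more $\lambda\rho'_\JS$-reductions, by cases on the rule used. Each shared rule is matched by the identical single $\lambda\rho'_\JS$-step. For a control rule that discards a local context $\mcctx = \msctx_1 \circ \cdots \circ \msctx_n$---say the rule sending $\mctx[\stryc{\mcctx[\sthrow\mval]}{\mvar}{\scls\mexp\menv}]$ to $\mctx[\scls\mexp{\menv[\mvar\mapsto\mval]}]$---I would apply the throw-bubbling rule $n$ times, each step peeling off the innermost remaining frame $\msctx_i$ (checking at each step that the surrounding context, e.g.\ $\mctx[\stryc{\msctx_1 \circ \cdots \circ \msctx_{n-1}}{\mvar}{\scls\mexp\menv}]$, is again a general context, which follows from the $\mdctx$-productions for the control frames), reaching $\mctx[\stryc{\sthrow\mval}{\mvar}{\scls\mexp\menv}]$, and then apply the reformulated throw-at-handler rule once; the contractum is syntactically identical. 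The throw-at-finalizer, throw-at-label, and the four break rules are handled the same way, and the ``error'' rule $\mcctx[\sthrow\mval] \stepsto \serr\mval$ becomes $n$ bubbling steps (with empty surrounding context) followed by the reformulated top-level throw rule. A routine induction on the length of the reduction then yields $\inj_\JS(\mexp) \xmultistepsto[\lambda\rho_\JS] \mans \Rightarrow \inj_\JS(\mexp) \xmultistepsto[\lambda\rho'_\JS] \mans$, since $\inj_\JS$ is common to both systems and answers are terminal.

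For the backward direction the subtlety is that a $\lambda\rho'_\JS$-reduction threads through intermediate ``mid-bubble'' states that no $\lambda\rho_\JS$ control step ever produces, so steps cannot be matched one-for-one. Instead I would prove that bubble chains are \emph{forced} and \emph{finite}: in any state of the form $\langle\msto, \mctx[\msctx[\sthrow\mval]]\rangle$ (or its break analogue) the bubbling redex is the \emph{unique} redex---the hole of $\msctx$ occupies exactly the next-to-reduce position, and since $\msctx$ excludes control frames, neither a shared rule nor a handler/finalizer/label rule applies---and each bubble step strictly decreases the number of local frames between the control expression and the nearest enclosing control frame. Hence a maximal run of bubble steps terminates in a state whose throw or break sits directly inside a control frame (or, for a throw, at the root), where exactly one resolving rule fires. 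It follows that any reduction $\inj_\JS(\mexp) \xmultistepsto[\lambda\rho'_\JS] \mans$ factors uniquely as shared-rule steps interleaved with ``blocks'' of $n \ge 0$ bubble steps followed by one resolving step; by the correspondence from the forward direction each such block is exactly one $\lambda\rho_\JS$ step with $\mcctx = \msctx_1 \circ \cdots \circ \msctx_n$, and collapsing every block gives a $\lambda\rho_\JS$-reduction with the same endpoints (the store is untouched inside each block, and no block can straddle the terminal answer because a bubble step never produces one).

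I expect the main obstacle to be exactly the ``forced and finite'' property used in the backward direction: making rigorous that, in $\lambda\rho'_\JS$, a pending throw or break inside a local frame admits no redex other than the bubbling one, so that a bubble chain can be isolated from the surrounding reduction and coalesced. This rests on the (here elided) unique-decomposition property of $\lambda\rho'_\JS$ evaluation contexts together with the deliberate restriction of $\msctx$ to local frames, which is what makes bubbling halt precisely at handlers, finalizers, and labels. Everything else reduces to straightforward inductions on reduction length or context structure, and the two evaluation functions then agree because $\evf$ is defined identically from $\xmultistepsto$ over the shared set of answers.
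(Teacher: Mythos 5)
Your proposal is correct and follows the same simulation idea the paper relies on: the paper offers no explicit proof of this lemma beyond the remark that the reformulated rules ``simulate the original single reduction by a number of steps that corresponds to the number of local frames,'' which is precisely your forward direction. Your additional care in the backward direction---isolating bubble chains via uniqueness of decomposition and coalescing each maximal chain plus its resolving step into one $\lambda\rho_\JS$ control step---fills in the half of the equivalence the paper leaves entirely implicit, and the obstacle you flag (that bubbling is forced and terminating) is exactly the right thing to pin on unique decomposition of $\lambda\rho'_\JS$ evaluation contexts.
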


Guha~\etal{} handle primitive operations in $\lambda_{JS}$ in standard fashion
by delegating to a $\delta$-function.
For the sake of analysis, we can delegate to any sound, finite abstraction of
the $\delta$-function.
The simplest such abstraction maps values to their types, which makes the
abstract $\delta$ function isomorphic to its intensional signature.
For in-depth discussion of  richer  abstract domains over basic values for use 
in JavaScript, we refer the reader to Jensen~\etal~\cite{dvanhorn:Jensen2009Type};
they provide abstract domains for JavaScript which could be plugged directly into the $\AJAM$.

\subsection{Correctness of the JAM}

The JAM is a correct evaluator for $\lambda\rho_\JS$, and hence for
$\lambda_\JS$ as well.
\begin{lemma}[Correctness]
\label{lem:correct}
For all programs $\mexp$,
\begin{gather*}
\inj_\JS(\mexp)\xmultistepsto[\lambda_\JS]\mans
\iff
\inj_\JAM(\mexp) \xmultistepsto[\JAM]\mans
\text.
\end{gather*}
\end{lemma}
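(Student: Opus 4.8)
The plan is to obtain the statement by composition along the chain of calculi already in place. Lemma~\ref{lem:correspondence} relates $\lambda_\JS$ to $\lambda\rho_\JS$ with answers matched by $\UNLD$, and the reformulation lemma above relates $\lambda\rho_\JS$ to $\lambda\rho'_\JS$ with answers matched exactly; so the only new ingredient is a correspondence between $\lambda\rho'_\JS$ and the $\JAM$, after which the $\mans = \UNLD(\mans')$ bookkeeping carries over unchanged from Lemma~\ref{lem:correspondence}. For that last step the short argument is that the $\JAM$ was \emph{derived} from the reduction semantics $\lambda\rho'_\JS$ by the refocusing pipeline of Danvy \etal~\cite{dvanhorn:Danvy-Nielsen:RS-04-26,dvanhorn:Biernacka2007Concrete,dvanhorn:Danvy:AFP08}: iterated decompose/contract/recompose, then refocusing (fusing recomposition with the next decomposition), then compression of corridor transitions, then representing the evaluation context as the list-structured control stack. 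Each transformation in that pipeline preserves the evaluation function, so $\evf$ as computed from $\lambda\rho'_\JS$ coincides with the partial function computed by iterating $\xstepsto[\JAM]$ from $\inj_\JAM(\mexp)$, which is exactly the claim.

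To make this self-contained I would exhibit the bisimulation that underlies refocusing. Define an ``unload'' map $\lfloor\cdot\rfloor$ from reachable $\JAM$ states to $\lambda\rho'_\JS$ configurations by nesting the frames on the control stack $\mETX$ from the hole outward into an evaluation context $\mctx$ and plugging the closure component: $\lfloor\langle\msto,\mcls,\mETX\erangle\rfloor = \lfloor\langle\msto,\mcls,\mETX\arangle\rfloor = \langle\msto,\mctx[\mcls]\rangle$, $\lfloor\langle\msto,\mval,\mETX\crangle\rfloor = \langle\msto,\mctx[\mval]\rangle$, with $\lfloor\langle\mval,\msto\rangle\rfloor$ and $\lfloor\langle\serr\mval,\msto\rangle\rfloor$ the corresponding answers. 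The facts to establish are: (i) an \textbf{eval} or \textbf{continue} transition is either a pure refocusing step, leaving $\lfloor\cdot\rfloor$ fixed, or one of the few transitions with a contraction compressed in --- such as the sequencing step where exposing $\sseq\mval\mcls$ lets the machine go straight to $\mcls$ --- which advances $\lfloor\cdot\rfloor$ by one $\lambda\rho'_\JS$ reduction; (ii) an \textbf{apply} transition advances $\lfloor\cdot\rfloor$ by exactly one $\lambda\rho'_\JS$ reduction, and conversely every $\lambda\rho'_\JS$ reduction from a reachable configuration is realized this way; (iii) between consecutive contractions only finitely many pure refocusing steps occur --- this is where the unique-decomposition property of $\lambda\rho'_\JS$ enters, namely that every closure is a value, a potential redex, or uniquely of the form $\msctx[\mcls']$; and (iv) $\lfloor\inj_\JAM(\mexp)\rfloor = \inj_\JS(\mexp)$ and the final $\JAM$ states are precisely the answers. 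A routine diagram chase over (i)--(iv) yields both directions of the biconditional, with the nondeterministic choices of $\allocff$ matched against the ``$\maddr\notin\dom(\msto)$'' freshness side-conditions of the reduction rules so that the two relations reach the same set of answers (up to the customary renaming of freshly chosen addresses).

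The step I expect to be the main obstacle is (ii) for the control operators. In $\lambda\rho'_\JS$ a $\sthrow\mval$ or $\sbreak\mlab\mval$ bubbles outward one local frame at a time (figure~\ref{fig:reductions-reformulated}), and the $\JAM$ is designed to mimic this through its \textbf{apply} transitions on $\sthrow\mval$ and $\sbreak\mlab\mval$, which inspect and pop the top stack frame one at a time. Verifying the lockstep requires a case analysis, frame class by frame class, of what may sit on top of the control stack --- a local frame, a handler $\stryc\mhole\mvar\mcls$, a finalizer $\stryf\mhole\mcls$, or a label $\slab\mlab\mhole$ --- matching the $\JAM$'s dispatch to the reduction in each case, with the extra subcases for matching versus non-matching labels and for the empty-stack rules that produce $\serr\mval$. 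A smaller, more routine obstacle is threading the two rule classes the figures elide --- environment propagation into subexpressions, and run-time errors for ill-formed redexes --- through both the derivation pipeline and the bisimulation; these are discharged by observing that the $\JAM$'s \textbf{eval} transitions propagate environments on demand and its stuck-redex cases step to the $\serr$ answers, so the images under $\lfloor\cdot\rfloor$ line up without further work.
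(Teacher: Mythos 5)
Your proposal takes essentially the same approach as the paper: the paper's own proof is a two-sentence sketch that appeals to the correctness of refocusing and the (trivial) meaning preservation of the subsequent transformations, with the details delegated to a step-by-step derivation in SML. Your write-up simply spells out the bisimulation underlying that refocusing argument (and the composition through $\lambda\rho_\JS$ and $\lambda\rho'_\JS$) in more detail, which is consistent with what the paper intends.
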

\begin{proof}(Sketch.)  
%
The correctness of the machine follows from the correctness of
refocusing~\cite{dvanhorn:Danvy-Nielsen:RS-04-26} and the (trivial)
meaning preservation of subsequent transformations.

The detailed step-by-step transformation from the reduction semantics
to the abstract machine has been carried out in the meta-language of
SML.
\end{proof}

For the purposes of program analysis, we rely on the following
definition of a program's reachable machine states, where $\state$
ranges over states:
\[
\JAM(\mexp) = \{ \state\ |\ \inj_\JAM(\mexp) \xmultistepsto[\JAM] \state \}\text.
\]

The set \(\JAM(\mexp)\) is potentially infinite and membership is
clearly undecidable.  In the next section, we devise a sound and
computable approximation to the set \(\JAM(\mexp)\) by a family of
pushdown automata.

\section{Pushdown abstractions of JavaScript}
\label{sec:pda}

To model non-local control precisely, the analysis must model the
program stack precisely.  Yet, the program stack can grow without
bound---a substantial obstacle for the finite-state framework.
Pushdown abstraction maps that program stack onto the unbounded stack
of a pushdown system.  Because the analysis inflicts a finite-state
abstraction on the \emph{control states} of the pushdown system, the
analysis remains decidable.

The idea is that by bounding the store, the control stack, while
unbounded, will consist of a finite stack alphabet.  Since the
remaining components of the machine are finite, the abstract machine
is equivalent in computational power to a pushdown automaton, and thus
reachability questions are casts naturally in terms of decidable PDA
reachability properties.


\subsection{Bounding the store, not the stack}

\[
\AJAM(\mexp) = \{ \astate\ |\ \inj_\JAM(\mexp) \xmultistepsto[\AJAM] \astate \}\text.
\]

The abstracted JAM provides a sound simulation of the JAM and, by
lemmas~\ref{lem:correspondence} and~\ref{lem:correct}, a sound
simulation of $\lambda\rho_\JS$, and $\lambda_\JS$, as well.

The machine's state-space is bounded simply by restricting the set of
allocatable addresses to a fixed set of finite size, $\sa{Address}$.
This necessitates a change in the machine transition system and the
representation of states.  The machine can no longer restrict
allocated addresses to be fresh with respect to the domain of the
store as is the case when bindings are allocated, ref-expression are
evaluated, and continuations are pushed.  Instead, the machine calls
an allocation function that returns a member of the finite set of
addresses.  Since the allocation function may return an address
already in use, the behavior of the store must change to accommodate
multiple values residing in a given location.  We let $\astore$ range
over such stores:
\[
\astore \in \sa{Store} = \sa{Address} \rightarrow_{\text{fin}} \mathcal{P}(\s{Value})\text.
\]
We let $\AJAM$ denote the abstract machine that results from
replacing all occurrences of the functions $\allocff$, $\putff$,
and $\getff$, with the following counterparts:
\begin{align*}
\aallocff &: \s{State} \rightarrow \sa{Address}\;\!^n
\\
\aputff &: \sa{Store} \times \sa{Address}\;\!^n \times \s{Value}\;\!^n \rightarrow \sa{Store}
\\
\agetff &: \sa{Store} \times \sa{Address} \rightarrow \mathcal{P}(\s{Value})
\end{align*}
The $\aallocff$ function works like $\allocff$, but produces addresses
from the finite set $\sa{Address}$.  The $\aputff$ function updates a
store location by \emph{joining} the given value to any existing
values that reside at that address:
\[
\aputf\masto\maddr\mval = \masto[\maddr\mapsto\{\mval\}\cup\masto(\maddr)]\text.
\]
Joining rather than updating is critical for maintaining soundness.

In essence, the finiteness of the address space implies collisions may
occur in the store.  By joining, we ensure these collisions are
modelled safely.  The $\agetff$ function returns the set of values at
a store location, allowing a non-deterministic choice of values at
that location.

We can formally relate the JAM to its abstracted counterpart through
the natural structural abstraction map $\absmap$ on their
state-spaces.
This map recurs over the state-space of the JAM to inflict a
finitizing abstraction at the leaves its state-space---addresses and
primitive values---and structures that cannot soundly absorb that
finitization, which in this case, is only the store.
The range of the store expands into a power set, so that when an abstract
address is \emph{re}-allocated, it can hold both the existing values and the
newly added value; formally:
\begin{equation*}
\absmap(\store) = \lambda \aaddr . \!\! \bigsqcup_{\absmap(\addr) = \aaddr} \!\! \absmap(\store(\addr))\text.
\end{equation*}

\begin{theorem}[Soundness]
  If $\state \xstepsto[\JAM] \state'$ and $\alpha(\state) \wt
  \astate$,
then there exists an abstract state $\astate'$, such that
  $\astate \xstepsto[\AJAM] \astate'$ and $\alpha(\state') \wt
  \astate'$.
\end{theorem}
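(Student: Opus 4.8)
The plan is to prove the statement by a case analysis on the concrete step $\state \xstepsto[\JAM] \state'$, exploiting the fact that the abstraction map $\absmap$ and the ordering $\wt$ are \emph{structural}. The map $\absmap$ is the identity on a state except that it finitizes addresses and primitive values at the leaves and replaces the store by a powerset-valued map; the order $\wt$ compares states componentwise, reducing to pointwise $\subseteq$ on stores and to the leaf orderings elsewhere. The consequence I would rely on throughout is that $\absmap(\state) \wt \astate$ forces $\astate$ to admit the abstract transition analogous to whichever concrete rule fired --- its closure and control-stack components have the required shape --- and the only component in which $\astate$ may be strictly larger than $\absmap(\state)$ is the store. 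Induction on the length of a run then lifts this single-step property to soundness of the reachable-state approximation $\AJAM(\mexp)$.

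First I would dispatch the bulk of the rules: all evaluation and continuation transitions, and the context-insensitive, store-insensitive application rules (record dereference/update/deletion and conditional contractions, the \texttt{label}-, \texttt{tryc}- and \texttt{tryf}-value rules, and the stack-popping \emph{bubbling} rules for \texttt{throw} and \texttt{break}). None of these read the store, write it, or allocate; so the analogous abstract rule fires from $\astate$ by the shape argument, leaves the store of $\astate$ untouched, and yields an $\astate'$ whose closure and stack are the $\absmap$-images of those of $\state'$. Since the store of $\astate$ already bounds $\absmap(\msto)$ and is unchanged, $\absmap(\state') \wt \astate'$ is immediate. Note that the control-operator rules that \emph{pop} the stack fall into this easy bucket precisely because the pushdown abstraction leaves the stack unabstracted.

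Next I would handle the rules that consult a store primitive or the $\delta$-oracle. For the $\getff$-rules (variable dereference, \texttt{deref}): concretely $\getf{\msto}{\maddr} = \{\msto(\maddr)\}$, and by the definition of $\absmap(\msto)$ together with $\absmap(\msto) \wt \masto$, where $\masto$ is the store of $\astate$, the value $\absmap(\msto(\maddr))$ lies in $\agetf{\masto}{\absmap(\maddr)}$; since $\agetff$ offers exactly a nondeterministic choice among those values, the abstract machine selects it, the store being unchanged. The $\delta_n$-contraction is handled analogously, via the assumed soundness of the abstract $\delta$. Finally, every rule that \emph{writes} the store (\texttt{set}, \texttt{ref}, the \texttt{let}- and application-bindings, and the \texttt{catch}-binding) invokes $\aputff$, which by construction \emph{joins} the written value into the store rather than overwriting it; hence the abstract store after the write still dominates, at every address, $\absmap$ of the contents of the concretely updated store, so $\wt$ on stores is preserved.

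The main obstacle is the freshness lost by abstract allocation in \texttt{let}, function application, \texttt{ref}, and the \texttt{catch}-binding rule for a thrown value: $\allocff$ returns an address fresh in $\dom(\msto)$ while $\aallocff$ must draw from the fixed finite set $\sa{Address}$, so an abstract re-allocation may collide with a live binding. The proof needs the standard compatibility hypothesis that $\aallocff$ soundly abstracts $\allocff$ --- that $\absmap(\allocf{\state})$ is among the addresses $\aallocf{\astate}$ may return whenever $\absmap(\state) \wt \astate$ --- which is discharged by each concrete allocation strategy instantiated in the following subsections. Granting it, I would select the abstract step that allocates $\absmap(\allocf{\state})$; combined with the joining behaviour of $\aputff$ just discussed, the new concrete binding is \emph{absorbed into}, not clobbering, whatever already resides at that abstract address, so the updated abstract store still bounds $\absmap$ of the updated concrete store, while the environment and closure components agree by the shape argument. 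Assembling these per-rule arguments completes the case analysis.
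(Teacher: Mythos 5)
Your proof is correct and follows essentially the same route as the paper's (much terser) proof sketch: a case analysis on the transition, with the deterministic cases immediate by calculation and the non-deterministic ones reducing to the abstract store in $\astate$ soundly approximating the concrete store. The one point you make explicit that the paper leaves implicit is the compatibility condition between $\aallocff$ and $\allocff$ --- that $\absmap(\allocf{\state})$ is among the possible results of $\aallocf{\astate}$ --- which is indeed required for the allocation cases unless one instead defines $\absmap$ on addresses a posteriori to track the abstract allocator's choices.
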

\begin{proof}
  We reason by case analysis on the transition.
  In the cases where the transition is deterministic, the result
  follows by calculation.
  For the the remaining non-deterministic cases, we must show
  an abstract state exists such that the simulation is
  preserved.
  By examining the rules for these cases, we see that all hinge on the
  abstract store in $\astate$ soundly approximating the concrete store
  in $\state$, which follows from the assumption that $\absmap(\state)
  \wt \astate$.
\end{proof}

The more interesting aspect of the pushdown abstraction is
decidability.  Notice that since the stack has a recursive, unbounded
structure, the state-space of the machine is potentially infinite so
deciding reachability by enumerating the reachable states will no
longer suffice.

\begin{theorem}[Decidability]
$\astate \in \AJAM(\mexp)$ is decidable.
\end{theorem}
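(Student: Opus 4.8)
The plan is to recognise the $\AJAM$ as a pushdown system and then invoke the classical decidability of pushdown reachability. First I would decompose each $\AJAM$ state into a \emph{control component}---a store $\astore$, a closure $\mcls$, and a mode tag ($\erangle$, $\crangle$, or $\arangle$)---together with a \emph{control stack} that is a finite list of single evaluation-context frames. I would then argue that every part except the stack ranges over a finite set. The store is a finite map $\sa{Address}\rightarrow_{\text{fin}}\mathcal{P}(\s{Value})$; $\sa{Address}$ is finite by fiat, and after the structural abstraction $\absmap$ has finitized the base values (strings and numbers collapsed to a finite abstract domain, as in the type abstraction noted for $\widehat\delta$), the set of values reachable in an abstract run is itself finite: a function value $\scls{\sfunc\mvars\mexp}\menv$ pairs a subterm of the fixed program with an environment whose domain lies among the finitely many program variables and whose range is $\sa{Address}$; a record pairs a finite abstract string domain with sets of such values; and a closure is built from the finitely many subterms of $\mexp$ and these finite environments. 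Hence there is a finite set $P$ of control components and a finite stack alphabet $\StackAlpha$ containing the single-frame local contexts $\msctx$ and the single-frame control contexts ($\stryf\mhole\mcls$, $\stryc\mhole\mvar\mcls$, $\slab\mlab\mhole$).

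Second, I would verify that every transition of the $\AJAM$ is a genuine pushdown rule $\langle p,\stackchar\rangle\hookrightarrow\langle p',w\rangle$ with $|w|\le 2$. The \textbf{evaluate} transitions inspect no stack symbol and push exactly one frame; the \textbf{continue} transitions pop the top frame and push at most one; and the \textbf{apply} transitions either contract a redex without touching the stack, or---for throws and breaks---pop the top frame and push at most one. The crucial point is that the reformulated semantics $\lambda\rho'_\JS$ of Section~\ref{sec:jam} was engineered precisely so that a control operator ``bubbles'' through its enclosing context one frame at a time, so no transition ever examines or excises more than the topmost frame; this is why the single-stack reformulation was needed, since the original two-stack JAM would instead be a two-stack pushdown automaton, which is Turing-complete. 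With these observations, $\AJAM$ is a pushdown system $\Pi$ over control states $P$ and stack alphabet $\StackAlpha$, and $\inj_\JAM(\mexp)\xmultistepsto[\AJAM]\astate$ holds exactly when the configuration $\astate$ is reachable in $\Pi$ from the configuration $\inj_\JAM(\mexp)$.

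Third, I would apply the standard result (Bouajjani--Esparza--Maler; Finkel--Willems--Wolper; Reps--Schwoon--Jha) that, for a pushdown system and a regular set $C$ of configurations, the forward-reachable set $\mathit{post}^*(C)$ is again regular and an automaton accepting it is effectively computable. Since the initial configuration set $\{\inj_\JAM(\mexp)\}$ is a singleton, hence regular, $\AJAM(\mexp)=\mathit{post}^*(\{\inj_\JAM(\mexp)\})$ is a computable regular set of configurations. Deciding $\astate\in\AJAM(\mexp)$ is then deciding membership of the configuration $\astate$ in a regular language, which is decidable; the same argument, followed by projecting away the stack, also decides reachability of an abstract control state, the form actually used by the analyses.

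I expect the main obstacle to be the finiteness claims of the first step, and in particular the finiteness of the stack alphabet. The subtle case is that the value algebra is mutually recursive---records nest inside records, closures carry environments carrying values---so I must show that $\absmap$ cuts every such recursion at a finite frontier, namely addresses and base values, and, crucially, that although a record may grow without bound in the concrete machine it collapses under abstraction to a finite shape (a map from a finite abstract string domain to finite sets of abstract values), while re-allocation into the bounded address space keeps the store finite. Once $P$ and $\StackAlpha$ are confirmed finite and each transition is confirmed to touch only the top of the stack, the reduction to pushdown reachability and the appeal to $\mathit{post}^*$-regularity are routine.
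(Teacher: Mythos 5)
Your proposal is correct and follows essentially the same route as the paper's own proof: identify the store-and-closure component as ranging over a finite set, observe that transitions only push or pop single frames from a finite stack alphabet, conclude the machine is a pushdown system, and appeal to decidability of pushdown reachability. You are in fact somewhat more careful than the paper on the two points it glosses over---the finiteness of the mutually recursive value space under abstraction, and the explicit appeal to $\mathit{post}^*$-regularity rather than an unspecified ``known result''---but the decomposition and the key lemma are the same.
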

\begin{proof}
  States of the abstracted JAM consist of a store, a closure, and a
  list of single evaluation contexts representing the control stack.
  Observe that with the exception of the stack, each of these sets is
  finite: for a given program, there are a fixed set of expressions;
  environments are finite since they map variables to addresses and
  the address space is bounded; since expressions and environments are
  finite, so too are the set of values; stores are finite since
  addresses and values are finite.

  For the machine transitions that dispatch on the control stack, only
  the top-most element is used and stack operations always either push
  or pop a single context frame on at a time,~\ie, the machine obeys a
  stack discipline.  The stack alphabet consists of single evaluation
  contexts, which include a number of expressions, a value, or an
  environment, all of which are finite sets.  Thus the stack alphabet
  is finite. Consequently the machine is a pushdown automaton and
  decidability follows from known results on pushdown automata.
\end{proof}

\subsection{Instantiations}

We have now described the design of a sound and decidable framework
for the pushdown analysis of JavaScript.  The framework has a single
point of control for governing the precision of an analysis, namely
the $\aalloc$ function.  The restrictions on acceptable $\aalloc$
functions are fairly liberal: \emph{any} allocation policy is sound,
and so long as the policy draws from a finite set of addresses for a
given program, the analysis will be decidable.

At its simplest, the $\aalloc$ function could produce a constant
address:
\[
\aallocf\state = \text{\tt a}\text.
\]

A more refined analysis can be obtained by a more refined allocation
policy.  0CFA for example, distinguishes bindings of differently named
variables, but merges all bindings for a given variable name.  The
allocation function corresponding to this strategy is:
\begin{align*}
\aallocf{\langle\msto,\slet\mvar\mval\mcls,\mETX\arangle} &= \mvar\\
\aallocf{\langle\msto,\sapp{\scls{\sfunc\mvars\mexp}\menv}\mvals,\mETX\arangle} &= \mvars\\
\aallocf{\langle\msto,\sthrow\mval,\cons{\stryc\mhole\mvar\mcls}\mETX\arangle} &= \mvar
\end{align*}
This strategy uses variables names as addresses and always allocates
the variable names being bound.  The strategy is finite for a given
program and produces a pushdown generalization of classical 0CFA.
Moreover, the state-space simplifies greatly since it can be observed
that under this strategy, every environment is the identity function
on variable names.  Thus environments could be eliminated from the
semantics.

There is still the need to designin a heap abstraction, \ie, what
should the allocation function produce for:
\[
\aallocf{\langle\msto,\sref\mval,\mETX\arangle}\text?
\]
Shivers' original formulation of 0CFA had a very simple heap
abstraction corresponding to the constant allocation function
above~\cite{dvanhorn:Shivers:1991:CFA}.  More refined heap
abstractions are obtained by simply designing better strategies for
this case of $\aalloc$.

The $k$-CFA hierarchy, of which 0CFA is the base, refines the above
allocation policy by pairing variables together with bounded history
of the calling context at the binding site of a variable.  Such an
abstraction is easily expressible in our framework as follows:
\begin{align*}
\aallocf{\langle\msto,\slet\mvar\mval\mcls,\mETX\arangle} &= \langle\mvar,\lfloor\mETX\rfloor_k\rangle\\
\aallocf{\langle\msto,\sapp{\scls{\sfunc\mvars\mexp}\menv}\mvals,\mETX\arangle} &= \langle\mvars,\lfloor\mETX\rfloor_k\rangle\\
\aallocf{\langle\msto,\sthrow\mval,\cons{\stryc\mhole\mvar\mcls}\mETX\arangle} &= \langle\mvar,\lfloor\mETX\rfloor_k\rangle\text,
\end{align*}
where
\begin{align*}
\lfloor\nil\rfloor_k &= \nil\\
\lfloor\mETX\rfloor_0 & = \nil\\
\lfloor\cons\mectx\mETX\rfloor_{k+1} & = \cons\mectx{\lfloor\mETX\rfloor_k}\text.
\end{align*}
This strategy uses variable names paired together with a fixed depth
view of the control stack to allocate bindings.  It is easy to vary
this strategy for various $k$-CFA like abstraction, \eg, taking the
top $k$ \emph{different} stack frames, or taking the top $k$
application frames by filtering out non-application frames.

By giving alternative definitions of $\aalloc$ it is straightforward
to design pushdown versions of other known analyses such as
CPA~\cite{dvanhorn:Agesen1995Cartesian},
sub-0CFA~\cite{dvanhorn:ashley-dybvig-toplas98}, and
$m$-CFA~\cite{mattmight:Might:2010:mCFA}.


\subsection{Implementation}

To empirically substantiate our formal claims, we have developed
executable models and test beds.  We have developed the reduction
semantics of $\lambda\rho_\JS$ in Standard ML (SML) and carried out
the refocusing construction and subsequent program transformations in
a step-by-step manner closely following the lecture notes of
Danvy~\cite{dvanhorn:Danvy:AFP08}.  We found using SML as a
metalanguage helpful due to its type system and non-exhaustive and
redundant pattern matching warnings.  For example, we were able to
encode Guha~\etal's soundness theorem, which is false without the
modification to the semantics as described in
section~\ref{sec:correspondence}, in SML in such a way that the type
of the one-step reduction relation, coupled with exhaustive pattern
matching, implies a program is either a value or can make progress.

We ported our semantics and concrete machines to PLT
Redex~\cite{dvanhorn:Felleisen2009Semantics} and then built their
abstractions.  This was done because PLT Redex supports programming
with relations and includes a property-based random testing mechanism.
The support for programming with relations is an important aspect for
building the non-deterministic transition systems of the abstracted
JAM machines since, unlike their concrete counterparts, the transition
system cannot be encoded as a function in a straightforward way.
Using the random testing framework~\cite{dvanhorn:Klein2010Random}, we
tested the correspondence, correctness, and soundness theorems.  As an
added benefit, we were able to visualize our test programs'
state-spaces using the included graphical tools.

Finally, we used Guha~\etal's code for desugaring in order to test our
framework on real JavaScript code.  We tested against the same test
bed as Guha~\etal: a significant portion of the Mozilla JavaScript
test suite; about 5,000 lines of unmodified code.  We tested the
closure-based semantics of $\lambda\rho_\JS$ for correspondence
against the substitution-based semantics of $\lambda_\JS$ and tested
the machines for correctness with respect to the $\lambda\rho_\JS$
semantics.  Finally, we tested the instantiations of our analytic
framework for soundness with respect to the machines.  Since the
semantics of $\lambda_\JS$ have been validated against the output of
Rhino, V8, and SpiderMonkey, and all of semantic artifacts simulate or
approximate $\lambda_\JS$, these tests substantiate our framework's
correctness.

\section{Related work}
\label{sec:related}

Our approach fits cleanly within the progression of work in abstract
interpretation~\cite{mattmight:Cousot:1977:AI,mattmight:Cousot:1979:Galois}
and is inspired by the pioneering work on higher-order program
analysis by Jones~\cite{dvanhorn:Jones:1981:LambdaFlow}.  Like Jones,
our work centers around machine-based abstractions of higher-order
languages; and like Jones~\cite{dvanhorn:Schmidt2007Statetransition},
we have obtained our machines by program transformations of high-level
semantic descriptions in the tradition of
Reynolds~\cite{dvanhorn:reynolds-hosc98}.
We have been able to leverage the refocusing approach of Danvy,~\etal,
to systematically derive such
machines~\cite{dvanhorn:Danvy-Nielsen:RS-04-26,dvanhorn:Biernacka2007Concrete,dvanhorn:Danvy:AFP08},
and our main technical insight has been that threading bindings---but
not continuations---through the store results in straightforward and
clearly sound framework that precisely reasons about control flow in
face of higher-order functions and sophisticated control operators.

\subsection{Pushdown analyses}

The most closely related work to ours is Vardoulakis and Shivers
recent work on CFA2~\cite{dvanhorn:Vardoulakis2011CFA2}.
CFA2 is a table-driven summarization algorithm that exploits the
balanced nature of calls and returns to improve return-flow precision
in a control-flow analysis for CPS programs.
Though CFA2 alludes to exploiting context-free languages, context-free
languages are not explicit in its formulation in the same way that
pushdown systems are in pushdown control-flow
analysis~\cite{dvanhorn:Earl2010Pushdown}.
With respect to CFA2, the pushdown analysis presented here is
potentially polyvariant and in direct-style.

On the other hand, CFA2 distinguishes stack-allocated and
store-allocated variable bindings, whereas our formulation of pushdown
control-flow analysis does not and allocates all bindings in the
store.
If CFA2 determines a binding can be allocated on the stack, that
binding will enjoy added precision during the analysis and is not
subject to merging like store-allocated bindings.

Recently, Vardoulakis and Shivers have extended CFA2 to analyze
programs containing the control operator {\tt
  call-with-current-continuation}
\cite{dvanhorn:Vardoulakis2011Pushdown}.  
The operator, abbreviated {\tt call/cc}, works as follows: at the
point it is applied, it reifies the continuation as procedure; when
that procedure is applied it aborts the call's continuation and
installs the reified continuation.
CFA2 is able to analyze this powerful control operator, which is able
to encode all of the control operators considered here, but without
the same guarantees of precision as this work is able to provide for
the weaker notion of exceptions and breaks.  So while CFA2 can analyze
{\tt call/cc}, it does so with the potential for loss of precision
about the control stack; indeed, this appears to be inherently
necessary for any computable analysis of {\tt call/cc} as the operator
does not obey a stack discipline.  Vardoulakis has implemented CFA2
for JavaScript as the ``Doctor JS''
tool.\footnote{\tt{http://doctorjs.org/}}

The current work also draws on CFL- and pushdown-reachability
analysis~\cite{mattmight:Bouajjani:1997:PDA-Reachability,dvanhorn:Kodumal2004Set,mattmight:Reps:1998:CFL,mattmight:Reps:2005:Weighted-PDA}.
%
%
%
CFL-reachability techniques have also been used to compute classical
finite-state abstraction CFAs~\cite{mattmight:Melski:2000:CFL} and
type-based polymorphic control-flow
analysis~\cite{mattmight:Rehof:2001:TypeBased}.
\emph{These analyses should not be confused with pushdown control-flow
  analysis}: our results demonstrate how to compute a fundamentally
more precise kind of CFA, while the work on CFL-reachability has shown
how to cast classical analyses, such as 0CFA, as a reachability
problem for a context-free language.
%

\subsection{JavaScript analyses}

Thiemann~\cite{dvanhorn:Thiemann2005Towards} develops a type system for Core JavaScript,
a restricted subset of JavaScript.
The type system rules out the application of non-functions, applying primitive
operations to values of incorrect base type, dereferencing fields of the {\tt
undefined} or {\tt null} value, and referring to unbound variables.
Jensen, M\o ller, and Thiemann, \cite{dvanhorn:Jensen2009Type} develop
an abstract interpretation computing type inference.
It builds on the type system of
Thiemann~\cite{dvanhorn:Thiemann2005Towards}, using it as inspiration for their
abstract domains.  
%
%
Richards \etal's landmark empirical survey of JavaScript
code~\cite{dvanhorn:Richards2010Analysis} made it clear that for
JavaScript analyses to work in the wild, it is not sufficient to
handle only a well-behaved core of the language.
Capturing ill-behaved parts of JavaScript soundly and precisely was a
major motivation for our research.

Subsequently, Heidegger and Thiemann have extended the type system
with a notion of \emph{recency} to improve
precision~\cite{dvanhorn:Heidegger2010Recency} and Jensen~\etal, have
developed a technique of \emph{lazy propagation} to increase the
feasibility of the analysis~\cite{dvanhorn:Jensen2011Interprocedural}.
Balakrishnan and Reps pioneered the idea of recency~\cite{mattmight:Balakrishnan:2006:Recency}:
objects are considered recent when created and given a singleton type
and treated flow-sensitively until ``demoted'' to a summary type that
is treated flow-insensitively.
Recency
enables strong update in analyses~\cite{dvanhorn:jagannathan-etal-popl98}, which is important for reasoning
precisely about initialization patterns in JavaScript programs.
Recency and lazy propagation are orthogonal to our analytic framework:
in our recent work, we show how to incorporate a generalization
of recency into a machine-based static analysis~\cite{mattmight:Might:2010:Shape}
through the concept of anodization.

Guha, Krishnamurthi and Jim~\cite{dvanhorn:Guha2009Using} developed an analysis
for intrusion-detection of JavaScript, driven in part by an adaptation of
  $k$-CFA to a large subset of JavaScript.
Our work differs from their work in that we are formally guaranteeing
soundness with respect to a concrete semantics, we provide
fine-grained control over precision for our finite-state analysis and
we also provide a pushdown analysis for handling the complex non-local
control features which pervade JavaScript code.
(Guha~\etal\ make a best-effort attempt at soundness and dynamically
detect violations of soundness in empirical trials, violations which they use
to refine their analysis.)

Chugh~\etal~\cite{dvanhorn:Chugh2009Staged} present a staged information-flow
analysis of JavaScript.
In effect, their algorithm partially evaluates the analysis with respect to the
available JavaScript to produce a residual analysis.
When more code becomes available, the residual analysis resumes.
Our own framework is directly amenable to such partial evaluation for handling
constructs like {\tt eval}: explore the state-space aggressively, but do not
explore past {\tt eval} states.
The resulting partial abstract transition graph is sound until the program
encounters {\tt eval}.
At this point, the analysis may be resumed with the code supplied to {\tt
eval}.


\section{Conclusions and perspective}
\label{sec:conclusion}

We present a principled systematic derivation of machine-based analysis for
JavaScript.
By starting with an established formal semantics and transforming it into an
abstract machine, we soundly capture JavaScript in full, quirks and all.
The abstraction of this machine yields a robust finite-state framework for the
static analysis of JavaScript, capable of instantiating the equivalent of
traditional techniques such as $k$-CFA and CPA.
Finding the traditional finite-state approach wanting in precision for
JavaScript's extensive use of non-local control, we extend the theory of
systematic abstraction of abstract machines from finite-state to pushdown.
These decidable pushdown machines precisely model the structure of the program
stack, and so do not lose precision in the presence of control constructs that
depend on it, such as  recursion or complex exceptional control-flow.

\begin{center}
{\tt https://github.com/dvanhorn/jam/}
\end{center}

\paragraph{Acknowledgments} We thank Arjun Guha for answering questions about $\lambda_\JS$.
Sam Tobin-Hochstadt and Mitchell Wand provided fruitful discussions
and feedback on early versions of this work.  We thank Olivier Danvy
and Jan Midtgaard for their hospitality and the rich intellectual
environment they provided during the authors' visit to Aarhus
University.



\bibliographystyle{plain}
\bibliography{bibliography}

\appendix

\end{document}